\newtheorem{theorem}{Theorem}
\newtheorem{corollary}{Corollary}
\newtheorem{proposition}{Proposition}
\newtheorem{remark}[theorem]{Remark}
\newcommand{\drop}[1]{}
\newcommand{\ourProblem}{\textsc{BudgetedDelivery}\xspace}
\newcommand{\returning}{\emph{Returning}\xspace}
\newcommand{\nonreturning}{\emph{Non-Returning}\xspace}
\newcommand{\NP}{$\mathrm{NP}$}
\newcommand{\true}{\mathrm{true}}
\newcommand{\false}{\mathrm{false}}
\newcommand{\bigO}{\mathcal{O}}
\begin{document}
\pagestyle{headings}  

\title{Collaborative Delivery with Energy-Constrained Mobile Robots\footnote{This work was partially supported by the project ANR-ANCOR (anr-14-CE36-0002-01) and the SNF (project 200021L\_156620).}}

\author[1]{Andreas Bärtschi}
\author[2]{Jérémie Chalopin}
\author[2]{Shantanu Das}
\author[3]{Yann Disser}
\author[1]{Barbara Geissmann}
\author[1]{Daniel Graf}
\author[2]{Arnaud Labourel}
\author[4]{Mat\'{u}\v{s} Mihal\'{a}k}
\affil[1]{
	\small ETH Zürich, Switzerland\\ 
	\texttt{$\left\{\right.$baertschi,barbara.geissmann,daniel.graf$\left.\right\}$@inf.ethz.ch}
}
\affil[2]{
	\small LIF, CNRS et Aix-Marseille Université, France\\
  	\texttt{$\left\{\right.$shantanu.das,jeremie.chalopin,arnaud.labourel$\left.\right\}$@lif.univ-mrs.fr}
}
\affil[3]{
	\small Technische Universität Berlin, Germany\\ 
	\texttt{disser@math.tu-berlin.de}
}
\affil[4]{
	\small Maastricht University, Netherlands\\ 
  	\texttt{matus.mihalak@maastrichtuniversity.nl}
}

\date{}
\maketitle

\begin{abstract}
	We consider the problem of collectively delivering some message from a specified source to a designated target location in a graph, using multiple mobile agents. 
	Each agent has a limited energy which constrains the distance it can move. Hence multiple agents need to collaborate to move the message, each agent handing over the message to the next agent to carry it forward. 
	Given the positions of the agents in the graph and their respective budgets, the problem of finding a feasible movement schedule for the agents can be challenging. 
	We consider two variants of the problem: in \emph{non-returning} delivery, the agents can stop anywhere; 
	whereas in \emph{returning} delivery, each agent needs to return to its starting location, a variant which has not been studied before. 
	
	We first provide a polynomial-time algorithm for returning delivery on trees, which is in contrast to the known (weak) \NP-hardness of the non-returning version.
	In addition, we give resource-augmented algorithms for returning delivery in general graphs.
	Finally, we give tight lower bounds on the required resource augmentation for both variants of the problem.
	In this sense, our results close the gap left by previous research.
\end{abstract}

\section{Introduction}
\label{sec:introduction}

We consider a team of mobile robots which are assigned a task that they need to perform collaboratively. Even simple tasks such as collecting information and delivering it to a target location can become challenging when it involves the cooperation of several agents. The difficulty of collaboration can be due to several limitations of the agents, such as limited communication, restricted vision or the lack of persistent memory, and this has been the subject of extensive research (see \cite{MACbook12} for a recent survey). When considering agents that move physically (such as mobile robots or automated vehicles), a major limitation of the agents are their energy resources, which restricts the travel distance of the agent. This is particularly true for small battery operated robots or drones, for which the energy limitation is the real bottleneck. We consider a set of mobile agents where each agent $i$ has a budget $B_i$ on the distance it can move, as in \cite{AnayaCCLPV16,DDalgosensors13}. We model their environment as an undirected edge-weighted graph $G$, with each agent starting on some vertex of $G$ and traveling along edges of $G$, until it runs out of energy and stops forever. In this model, the agents are obliged to collaborate as no single agent can usually perform the required task on its own. 

The problem we consider is that of moving some information from a given source location to a target location in the graph $G$ using a subset of the agents. Although the problem sounds simple, finding a valid schedule for the agents to deliver the message, is computationally hard even if we are given full information on the graph and the location of the agents. Given a graph $G$ with designated source and target vertices, and $k$ agents with given starting locations and energy budgets, the decision problem of whether the agents can collectively deliver a single message from the source to the target node in $G$ is called \ourProblem. 
Chalopin et al.~\cite{DDalgosensors13,DDicalp14} showed that \nonreturning \ourProblem is weakly \NP-hard on paths and strongly \NP-hard on general graphs. 

Unlike previous papers, we also consider a version of the problem where each agent needs to return to its starting location after completing its task. This is a natural assumption, e.g. for robots that need to return to their docking station for maintenance or recharging. We call this variant  \returning \ourProblem.  Surprisingly, this variant of the problem is easier to solve when the graph is a tree (unlike the original version of the problem), but we show it to be strongly \NP-hard even for planar graphs. We present a polynomial time algorithm for solving \returning \ourProblem on trees.

For arbitrary graphs, we are interested in resource-augmented algorithms. Since finding a feasible schedule for \ourProblem is computationally hard when the agents have just enough energy to make delivery possible, we consider augmenting the energy of each robot by a constant factor $\gamma$, to enable a polynomial-time solution to the problem. 
{Given an instance of \ourProblem and some $\gamma>1$, we have a $\gamma$-resource-augmented algorithm, if the algorithm, running in polynomial time, either (correctly) answers that there is no feasible schedule, or finds a feasible schedule for the modified instance with augmented budgets $\hat{B}_i=\gamma \cdot B_i$ for each agent $i$. }

\drop{We provide tight results on resource augmentation of \returning \ourProblem in arbitrary graphs, giving a $2$-resource-augmented algorithm and proving that there exists no ($2-\epsilon$)-resource-augmented algorithm unless P=NP. Similarly for the \nonreturning version of \ourProblem, we show that there exists no ($3-\epsilon$)-resource-augmented algorithm which matches the positive result of~\cite{DDalgosensors13} who presented a $3$-resource-augmented algorithm for the problem. We show that both versions of the problem are strongly \NP-hard even for planar graphs. Finally, we consider a relaxation of the problem where we are given as input an order on the agents and the agents are restricted to move in the prescribed order. In this particular case, the problem is solvable in polynomial time on arbitrary graphs and we provide such a solution.}

\subparagraph{Our Model.} 
We consider an undirected edge-weighted graph $G = (V,E)$ with $n=|V|$ vertices and $m=|E|$ edges.
The weight $w(e)$ of an edge $e \in E$ defines the energy required to cross the edge in either direction.
We have $k$ mobile agents which are initially placed on arbitrary nodes $p_1, \dots, p_k$ of $G$, called starting positions. Each agent $i$ has an initially assigned budget $B_i \in \mathbb{R}_{\geq 0}$ and can move along the edges of the graph, for a total distance of at most $B_i$ (if an agent travels only on a part of an edge,
 its travelled distance is downscaled proportionally to the part travelled).
The agents are required to move a message from a given source node $s$ to a target node $t$. An agent can pick up the message from its current location, carry it to another location (a vertex or a point inside an edge), and drop it there. 
Agents have global knowledge of the graph and may communicate freely.

Given a graph $G$ with vertices $s\neq t \in V(G)$ and the starting nodes and budgets for the $k$ agents, we define \ourProblem as the decision problem of whether the agents can collectively deliver the message without exceeding their individual budgets. 
In \returning \ourProblem each agent needs to return to its respective starting position before using up its energy budget;
in the \nonreturning version we do not place such a restriction on the agents and an agent may terminate at any location in the graph.\\
A solution to \ourProblem is given in the form of a \textit{schedule} which prescribes for each agent whether it moves and if so, the two locations in which it has to pick up and drop off the message.
A schedule is \textit{feasible} if the message can be delivered from $s$ to $t$.

\subparagraph{Related Work.} 

Delivery problems in the graph have been usually studied for a single agent moving in the graph. For example, the well known \emph{Travelling salesman problem} (TSP) or and the \emph{Chinese postman problem} (CPP) require an agent to deliver packets to multiple destinations located in the nodes of the graph or the edges of the graph. The optimization problem of minimizing the total distance traveled is known to be \NP-hard~\cite{ApplegateTSP} for TSP, but can be solved in polynomial time for the CPP~\cite{Edmonds73}. 

When the graph is not known in advance, the problem of exploring a graph by a single agent has been studied with the objective of minimizing the number of edges traversed (see e.g. ~\cite{PanPe99,AlbH00}). Exploration by a team of two agents that can communicate at a distance,  has been studied by Bender and Slonim~\cite{BS94} for digraphs without node identifiers. 
The model of energy-constrained robot was introduced by Betke et al.~\cite{Betke95} for single agent exploration of grid graphs. Later Awerbuch et al.~\cite{AwBS99} studied the same problem for general graphs. In both these papers, the agent could return to its starting node to refuel and between two visits to the starting node, the agent could traverse at most $B$ edges. Duncan et al.~\cite{Duncan01} studied a similar model where the agent is tied with a rope of length $B$ to the starting location and they optimized the exploration time, giving an $\bigO(m)$ time algorithm.  

For energy-constrained agents without the option of refuelling, multiple agents may be needed to explore even graphs of restricted diameter. Given a graph $G$ and $k$ agents starting from the same location, each having an energy constraint of $B$, deciding whether $G$ can be explored by the agents is \NP-hard, even if graph $G$ is a tree~\cite{FraGKP06}. Dynia et al.~studied the online version of the problem~\cite{poweraware06,Dynia2007}. They presented algorithms for exploration of trees by $k$ agents when the energy of each agent is augmented by a constant factor over the minimum energy $B$ required per agent in the offline solution. Das et al.~\cite{DDK2015} presented online algorithms that optimize the number of agents used for tree exploration when each agent has a fixed energy bound $B$. On the other hand, Dereniowski et al.~\cite{Dereniowski2015} gave an optimal time algorithm for exploring general graphs using a large number of agents.  
Ortolf et al.~\cite{Ortolf2012} showed bounds on the competitive ratio of online exploration of grid graphs with obstacles, using $k$ agents.

When multiple agents start from arbitrary locations in a graph, optimizing the total energy consumption of the agents is computationally hard for several formation problems which require the agents to place themselves in desired configurations (e.g. connected or independent configurations) in a graph. Demaine et al.~\cite{Demaine2009} studied such optimization problems and provided approximation algorithms and inapproximability results. Similar problems have been studied for agents moving in the visibility graphs of simple polygons and optimizing either the total energy consumed or the maximum energy consumed per agent can be hard to approximate even in this setting, as shown by Bilo et al.~\cite{Bilo2013}. 

Anaya et al.~\cite{AnayaCCLPV16} studied centralized and distributed algorithms for the information exchange by energy-constrained agents, in particular the problem of transferring information from one agent to all others (\emph{Broadcast}) and from all agents to one agent (\emph{Convergecast}). For both problems, they provided hardness results for trees and approximation algorithms for arbitrary graphs. The budgeted delivery problem was studied by Chalopin et al.~\cite{DDalgosensors13} who presented hardness results for general graphs as well as resource-augmented algorithms. For the simpler case of lines,  \cite{DDicalp14} proved that the problem is weakly \NP-hard and presented a quasi-pseudo-polynomial time algorithm. 
Czyzowicz et al.~\cite{EnergyExchange15} recently showed that the problems of budgeted delivery, broadcast and convergecast remain \NP-hard for general graphs even if the agents are allowed to exchange energy when they meet.

\subparagraph{Our Contribution.} 

This is the first paper to study the \returning version of  \ourProblem. We first show that this problem can be solved in $\bigO(n + k \log{k})$ time for lines and trees (Section~\ref{sec:line}). This is in sharp contrast to the \nonreturning version which was shown to be weakly \NP-hard~\cite{DDicalp14} even on lines. In Section~\ref{sec:planar-hardness}, we prove that \returning \ourProblem is \NP-hard even for planar graphs. For arbitrary graphs with arbitrary values of agent budgets, we present a $2$-resource-augmented algorithm and we prove that this is the best possible, as there exists no ($2-\epsilon$)-resource-augmented algorithm unless $\mathrm{P}=\mathrm{NP}$ (Section~\ref{sec:resource-augmentation-hardness}). We show that this bound can be broken when the agents have the same energy budget and we present a $(2-2/k)$-resource-augmented algorithm for this case.

For the \nonreturning version of the \ourProblem, we close the gaps left open by previous research~\cite{DDalgosensors13,DDicalp14}. In particular we prove that this variant of the problem is also strongly \NP-hard on planar graphs, while it was known to be strongly \NP-hard for general graphs and weakly \NP-hard on trees. We also show tightness of the $3$-resource-augmented algorithm for the problem, presented in \cite{DDalgosensors13}. Finally, in Section~\ref{sec:discussion}, we investigate the source of hardness for \ourProblem and show that the problem becomes easy when the order in which the agents pick up the message is known in advance. 
\drop{For this simplified case, we present a polynomial time solution for \returning \ourProblem.}

\section{Returning BudgetedDelivery on the Tree}
\label{sec:line}
We study the \returning \ourProblem on a tree and show that it can be solved in polynomial time. We immediately observe that this problem is reducible to the \returning \ourProblem on a path: There is a unique s-t path on a tree and we can move each agent from her starting position to the nearest node on this s-t path while subtracting from her budget twice the distance traveled.
The path problem now has an equivalent geometric representation on the line: the source node $s$, the target node $t$, and the starting positions of the agents $p_i$ are coordinates of the real line. We assume $s<t$, i.e., the message needs to be delivered from left to right.

Without loss of generality, we consider schedules in which every agent $i$ that moves uses all its budget $B_i$.
Because every agent needs to return to its starting position, an agent $i$ can carry the message 
on any interval of size $B_i/2$ that contains the starting position $p_i$.
For every agent $i$, let $l_i = p_i - B_i/2$ denote the leftmost point where she can pick a message, and let $r_i = p_i + B_i/2$ be the rightmost point to where she can deliver the message.
The \returning \ourProblem on a line now becomes the following \emph{covering problem}: Can we choose, for every $i$, an interval $I_i$ of size $B_i/2$ that lies completely within the \emph{region} $[l_i,r_i]$ such that the segment $[s,t]$ is covered by the chosen intervals, i.e., such that $[s,t]\subseteq \cup_i I_i$?

The following \textit{greedy} algorithm solves the covering problem. The algorithm works iteratively in rounds $r=1,2,\ldots$. We initially set $s_1=s$. We stop the algorithm whenever $s_r\geq t$, and return $\true$.
In round $r$, we pick $i^*$ having the smallest $r_{i^*}$ among all agents $i$ with $l_i \leq s_r < r_i$, and set $s_{r+1} = \min\{r_{i^*},s_r + B_{i^*}/2\}$ and $I_{i^*} = (s_{r+1} - B_{i^*}/2, s_{r+1})$, and continue with the next round $r+1$. If we cannot choose $i^*$, we stop the algorithm and return $\false$.

\begin{theorem}
There is an $\bigO(n + k\log k)$-time algorithm for \returning \ourProblem on a tree.
\end{theorem}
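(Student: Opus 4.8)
The tree-to-line reduction stated above is the first step, and it is clearly carried out in $\bigO(n)$ time: computing the unique $s$-$t$ path and, for every agent, its projection onto that path together with the distance travelled, can be done with a single traversal of the tree; agents whose remaining budget after projection would be negative are simply discarded. This leaves exactly the covering problem on the line. What remains is therefore (i) to prove that the greedy algorithm decides the covering problem correctly, and (ii) to argue that, reduction included, everything runs in $\bigO(n + k\log k)$ time.

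For correctness, the direction ``$\true$ is returned $\Rightarrow$ a feasible schedule exists'' is immediate: by construction every chosen interval $I_{i^*}$ has length $B_{i^*}/2$ and lies inside the region $[l_{i^*},r_{i^*}]$, hence is realizable by agent $i^*$; and since $s_{r+1}-B_{i^*}/2\le s_r$ for the agent $i^*$ picked in round $r$, the intervals chosen in consecutive rounds meet or overlap, so (treating hand-offs as happening at a point) their union covers $[s,s_R]\supseteq[s,t]$. The interesting direction is ``a feasible schedule exists $\Rightarrow$ $\true$ is returned,'' and the plan is a standard exchange argument organized around the invariant: \emph{if the agents not yet picked before round $r$ admit a placement of intervals covering $[s_r,t]$, then the agents not picked after round $r$ admit a placement covering $[s_{r+1},t]$.} Assuming feasibility, the hypothesis holds in round $1$ (with $s_1=s$), so by induction the greedy never gets stuck before the frontier reaches $t$; contrapositively, if it returns $\false$ the instance is infeasible.

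Establishing this invariant is the heart of the proof and the step I expect to be the main obstacle. Fix a placement of the not-yet-used agents covering $[s_r,t]$ and let $j$ be an agent whose interval covers a point immediately to the right of $s_r$; then $l_j\le s_r<r_j$, so $j$ is eligible, and the greedy choice $i^*$ satisfies $r_{i^*}\le r_j$ — this is precisely why the rule selects the agent with \emph{smallest} $r_{i^*}$. I would then modify the placement: remove $i^*$'s interval (if it was used at all), reassign $i^*$ to carry the message on the interval of length $B_{i^*}/2$ ending at $s_{r+1}$, and argue that the remaining agents still cover $[s_{r+1},t]$, using that $j$ (now possibly freed) together with the intervals lying to the right of the removed one bridge the gap up to $s_{r+1}$. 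The delicate points are the case distinctions ($i^*=j$ versus $i^*\ne j$, and whether $i^*$ appears in the fixed placement at all) and the handling of the endpoints at the breakpoints $s_r$; these are routine but have to be done carefully. One also checks that whenever the fixed placement is nonempty and $s_r<t$ there \emph{is} an eligible agent, so the greedy makes strictly positive progress each round.

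For the running time, the reduction costs $\bigO(n)$ and the instance $s=t$ is disposed of up front. It then remains to run the greedy on the $k$ agents with regions $[l_i,r_i]$. We sort the agents by $l_i$ in $\bigO(k\log k)$ time and sweep the frontier rightwards from $s$, maintaining a min-heap of the currently eligible agents keyed by $r_i$: each time the frontier advances we insert every agent whose $l_i$ now lies below it, we lazily delete from the top every agent whose $r_i$ is at or below the frontier, and then we extract the minimum (this is $i^*$) and advance the frontier to $\min\{r_{i^*},\,s_r+B_{i^*}/2\}$. Each agent is inserted into and removed from the heap at most once, and there are at most $k$ rounds, so the greedy runs in $\bigO(k\log k)$ and the whole algorithm in $\bigO(n+k\log k)$.
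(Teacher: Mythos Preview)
Your proposal is correct and follows essentially the same approach as the paper: reduce the tree to a line in $\bigO(n)$, then prove the greedy rule correct via an exchange argument and implement it with a priority queue in $\bigO(k\log k)$. The paper organises the exchange as a contradiction (take a \emph{minimal} cover agreeing with greedy on a maximal prefix, then swap $i^*_{j+1}$ with the greedy choice $i_{j+1}$ to extend the prefix), whereas you organise it as a forward invariant; this is the usual presentational duality and not a genuine difference. One remark: the paper makes explicit the structural fact that each agent's interval has length exactly half its region $[l_i,r_i]$, which is what guarantees that the swapped pair still covers $[s_{r+1},r_{i_{j+1}}]$; in your sketch this is hidden inside ``routine case distinctions,'' and you will need it when $i^*\neq j$ and $i^*$ does appear in the fixed covering.
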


\begin{proof}
The reduction from a tree to a path takes $\bigO(n)$ time using breadth-first search from $s$ and the algorithm \textit{greedy} can be implemented in time $\bigO(k\log k)$ using a priority queue.

For the correctness, we now show that greedy returns a solution to the covering problem if and only if there exists one.
Greedy can be seen as advancing the cover of $[s,t]$ from left to right by adding intervals $I_i$. 
Whenever it decides upon $I_i$, it will set $s_r$ to the respective endpoint of $I_i$, and never ever consider $i$ again or change the placement of $I_i$ within the boundaries $[l_i,r_i]$.
Thus, whenever $s_r\geq t$, the intervals $I_i$ form a cover of $[s,t]$.

We now show that if a cover exists, greedy finds one.
Observe first that a cover can be given by a subset of the agents $\{i_1,\ldots,i_t\}$, $t\leq k$, and by their ordering $(i_1,i_2,\ldots)$, according to the right endpoints of their intervals $I_{i_j}$, since we can reconstruct a covering by always placing the respective interval $I_{i_j}$ at the rightmost possible position.

Suppose, for contradiction, that greedy fails. 
Let $(i_1^*,i_2^*,\ldots)$ be a minimal cover of $[s,t]$ that agrees with the greedy schedule $(i_1,i_2,\ldots)$ in the maximum number of first agents $i_1,\ldots,i_j$.
Hence, $j+1$ is the first position such that $i^*_{j+1}\neq i_{j+1}$.
The left endpoints of $I^*_{j+1}$ and $I_{j+1}$ correspond to $s_{r+1}$ in our algorithm. 
If agent $i_{j+1}$ does not appear in the solution $(i_1^*,i_2^*,\ldots)$, adding $i_{j+1}$ to that solution and deleting some of the subsequent ones results in a minimal cover that agrees on the first $j+1$ agents, a contradiction.

If agent $i_{j+1}$ appears in the solution $(i_1^*,i_2^*,\ldots)$, say, as agent $i_{j+\delta}^*$, then we modify this cover by swapping $i_{j+1}^*$ with $i_{j+\delta}^*$.
We claim that the new solution still covers $[s,t]$.
This follows immediately by observing that greedy chose $i_{j+1}$ to have smallest $r_i$ among all agents that can extend the covering beyond  $s_{r+1}$. Since every agent $i$ covers at least half of its region $[l_i,r_i]$, we know that $i_{j+1}^*$ and $i_{j+\delta}^*$ together cover the region $[s_{r+1},r_{i_{j+\delta}^*}]$, and therefore by minimality $i_{j+\delta}^* = i_{j+2}^*$. Finally, if we change the order of the two agents, they will still cover the region $[s_{r+1},r_{i_{j+\delta}^*}]$ (see Figure~\ref{fig:swap}).
\begin{figure}[t]
	\includegraphics[width=\textwidth]{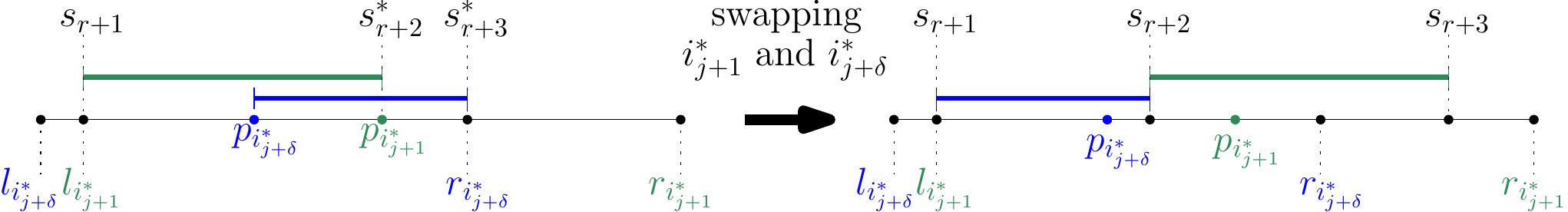}
	\caption{Changing the order of agents $i_{j+1}^*$ and $i_{j+\delta}^*$ in the schedule.\label{fig:swap}}
\end{figure}
\end{proof}

\section{Resource Augmentation Algorithms}
\label{sec:resource-augmentation}
We now look at general graphs $G=(V,E)$. As we will see in the next section, \ourProblem is \NP-hard, hence we augment the budget of each agent by a factor $\gamma>1$ to allow for polynomial-time solutions.
For non-returning agents, a $\min\left\{ 3, 1+\max\smash{\tfrac{B_i}{B_j}} \vphantom{3^2}\right\}$-resource-augmented algorithm was given by Chalopin~et.~al.~\cite{DDalgosensors13}.
We first provide a $2$-resource-augmented algorithm for \returning \ourProblem. This is tight as there is no polynomial-time $(2-\varepsilon)$-resource-augmented algorithm, 
unless $\mathrm{P}=\mathrm{NP}$ (Section~\ref{sec:resource-augmentation-hardness}). If, however, the budgets of the agents are similar, we can go below the $2$-barrier: 
In this case, we present a $\left(1+\smash{\tfrac{k-2}{k}\max\tfrac{B_i}{B_j}} \vphantom{3^2} \right)$-resource-augmented algorithm.
Throughout this section, we assume that there is no feasible schedule with a single agent, which we can easily verify.

\subparagraph{Preliminaries.} We denote by $d(u,v)$ the distance of two points $u,v \in G$. Assume an agent $i$ with budget $B_i$ starts in $u$ and moves first to $v$. 
Which locations in the graph (vertices and positions on the edges) are still reachable by $i$ so that he has sufficent energy left to move back to $u$? We define the ellipsoid 
$\mathcal{E}(u,v,B_i) = \left\{ p \in G \ | \ d(u,v)+d(v,p)+d(p,u) \leq B_i \right\}$
and the ball $\mathcal{B}(u,\tfrac{B_i}{2}) = \mathcal{E}(u,u,B_i)$.
It is easy to see that $\mathcal{E}(u,v,B_i)$ can be 
(i) computed in polynomial time by running Dijkstra's shortest path algorithm from both $u$ and $v$ and (ii) represented in linear space: 
We store all vertices $p \in V$ with $p\in \mathcal{E}(u,v,B_i)$, and for each edge $(p,q) \in E$ with $p\in \mathcal{E}(u,v,B_i), q\notin \mathcal{E}(u,v,B_i)$ we store the furthest point of $(p,q)$ still reachable by $i$.

\begin{theorem}[2-resource-augmentation] 
	There is a polynomial-time $2$-resource-augmented algorithm for \returning \ourProblem.
	\label{thm:2-resource-augmentation}
\end{theorem}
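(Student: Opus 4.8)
My plan is to reduce \returning \ourProblem on $G$ to a simple connectivity question in an auxiliary graph over the agents, losing exactly the claimed factor $2$. The starting point is a structural observation about unaugmented feasible schedules: since each agent picks up and drops off the message at most once, a feasible schedule routes the message along a chain of \emph{distinct} agents $i_1,\dots,i_\ell$, where $i_1$ picks up at $s$, each $i_j$ carries the message from a handover point $x_{j-1}$ to the next handover point $x_j$, and $i_\ell$ delivers at $x_\ell=t$. Because $i_j$ must also return to $p_{i_j}$, feasibility forces $d(p_{i_j},x_{j-1})+d(x_{j-1},x_j)+d(x_j,p_{i_j})\le B_{i_j}$, and combining this with the triangle inequalities $d(p_{i_j},x_{j-1})\le d(p_{i_j},x_j)+d(x_j,x_{j-1})$ and its symmetric counterpart gives $d(p_{i_j},x_{j-1})\le B_{i_j}/2$ and $d(p_{i_j},x_j)\le B_{i_j}/2$. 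In other words, both handover points of $i_j$ lie in the ball $\mathcal{B}(p_{i_j},B_{i_j}/2)$; in particular $s\in\mathcal{B}(p_{i_1},B_{i_1}/2)$, $t\in\mathcal{B}(p_{i_\ell},B_{i_\ell}/2)$, and consecutive balls $\mathcal{B}(p_{i_j},B_{i_j}/2)$ and $\mathcal{B}(p_{i_{j+1}},B_{i_{j+1}}/2)$ intersect (in the point $x_j$).

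This dictates the algorithm. I would build an auxiliary graph $H$ whose vertices are the $k$ agents together with two special vertices for $s$ and $t$, placing an edge between agents $i$ and $j$ iff $\mathcal{B}(p_i,B_i/2)\cap\mathcal{B}(p_j,B_j/2)\neq\emptyset$ (equivalently, $d(p_i,p_j)\le(B_i+B_j)/2$), an edge between $s$ and $i$ iff $d(s,p_i)\le B_i/2$, and analogously for $t$. All these tests are shortest-path distance comparisons, so $H$ can be built in polynomial time by running Dijkstra from $s$, from $t$, and from every $p_i$. The algorithm answers \false{} if $s$ and $t$ lie in different connected components of $H$ — which is correct, since by the structural observation any feasible schedule would yield an $s$--$t$ path in $H$ — and otherwise it picks an arbitrary simple $s$--$t$ path $s,i_1,\dots,i_\ell,t$ in $H$ and outputs a schedule along it.

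It then remains to convert such a path into a feasible schedule for the $\hat B_i=2B_i$ instance. I would set $x_0=s$, $x_\ell=t$, and for $1\le j\le\ell-1$ choose $x_j$ to be any point of $\mathcal{B}(p_{i_j},B_{i_j}/2)\cap\mathcal{B}(p_{i_{j+1}},B_{i_{j+1}}/2)$ — concretely, the point at distance $\min\{B_{i_j}/2,\,d(p_{i_j},p_{i_{j+1}})\}$ from $p_{i_j}$ along a shortest $p_{i_j}$--$p_{i_{j+1}}$ path, which is a vertex or an interior point of an edge and hence explicitly representable. Agent $i_j$ then picks up at $x_{j-1}$ and drops off at $x_j$. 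Since $x_{j-1},x_j\in\mathcal{B}(p_{i_j},B_{i_j}/2)$ for every $j$ (using the $s$- and $t$-edges of $H$ for $j=1$ and $j=\ell$), the cost incurred by $i_j$ is
$$d(p_{i_j},x_{j-1})+d(x_{j-1},x_j)+d(x_j,p_{i_j})\;\le\;\tfrac{B_{i_j}}{2}+\Big(d(x_{j-1},p_{i_j})+d(p_{i_j},x_j)\Big)+\tfrac{B_{i_j}}{2}\;\le\;2B_{i_j},$$
so the schedule is feasible with the augmented budgets, and because the agents on a simple path are pairwise distinct there is no conflict among their prescribed moves.

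The only delicate point — and the part I would be most careful about — is the first step: arguing that the reduction to ball-intersection connectivity is lossless, i.e. that every feasible returning schedule genuinely certifies an $s$--$t$ path in $H$. This is exactly where the return constraint is used, and it is what buys the factor $2$ (rather than the $3$ needed in the \nonreturning case), together with the modelling assumption that each agent picks up and drops off at most once, which is what makes the carrying sequence a simple chain. Everything else — polynomial running time and the linear-size representation of balls and of the handover points — is immediate from the preliminaries.
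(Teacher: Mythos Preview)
Your proposal is correct and follows essentially the same approach as the paper: build the intersection graph of the balls $\mathcal{B}(p_i,B_i/2)$ (together with $s$ and $t$), test $s$--$t$ connectivity, and turn any path into a schedule with handover points in consecutive intersections, bounding each agent's cost by $2B_i$. The only difference is that you spell out the structural observation (via the triangle inequality) explaining why a feasible unaugmented schedule certifies such a path, which the paper leaves implicit.
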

\begin{proof}
	Denote by $p_i$ the starting position of agent $i$. We consider the balls $\mathcal{B}_i := \mathcal{B}(p_i, \tfrac{B_i}{2})$ around all agents,
	as well as the balls $\mathcal{B}(s,0)$ and $\mathcal{B}(t,0)$ of radius 0 around $s$ and $t$. We compute the \emph{intersection graph} $G_I$ of the balls, which can be done in polynomial time. 
	If there is a feasible schedule, then there must be a path from $\mathcal{B}(s,0)$ to $\mathcal{B}(t,0)$ in $G_I$ (for example the path given by the balls around the agents in the feasible schedule).

	If there is no path from $\mathcal{B}(s,0)$ to $\mathcal{B}(t,0)$, then the algorithm outputs that there is no feasible schedule with non-augmented budgets. 
	Otherwise we can get a 2-resource-augmentation as follows:
	Pick a shortest path from $\mathcal{B}(s,0)$ to $\mathcal{B}(t,0)$ in $G_I$ and denote by $\ell \leq k$ the number of agents on this path, labeled without loss of generality $1,2,\ldots, \ell$.
	For each edge on the shortest path, we specify a handover point $h_i \in \mathcal{B}_i \cap \mathcal{B}_{i+1}$ in $G$ (where we set $h_{0}=s$ and $h_{\ell} = t$). 
	Then each agent $i,\ i = 1,\ldots,\ell$ walks from its starting position $p_i$ to the handover point $h_{i-1}$ to pick up the message,
	goes on to the handover point $h_{i}$ to drop the message there, and returns home to $p_i$. Since $h_{i-1}, h_{i} \in \mathcal{B}(p_i,\tfrac{B_i}{2})$, 
	the budget needed by agent $i$ to do so is at most $d(p_i, h_{i-1}) + d(h_{i-1}, h_{i}) + d(h_{i},p_i) \leq \tfrac{B_i}{2} + 2\cdot \tfrac{B_i}{2} + \tfrac{B_i}{2} = 2\cdot B_i$.	
\end{proof}

\begin{theorem}
	There is a polynomial-time $\left(1+ \smash{\tfrac{k-2}{k}\max \tfrac{B_j}{B_i}} \vphantom{3^2} \right)$-resource-augmented algorithm for \returning \ourProblem. 
	\label{thm:advanced-resource-augmentation}
\end{theorem}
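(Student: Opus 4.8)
The plan is to refine the ball-intersection approach from Theorem~\ref{thm:2-resource-augmentation}. The wasteful step there is that each of the $\ell$ agents on the path spends $B_i$ travelling to the handover point and $B_i$ returning, so half its augmented budget is "wasted" on the approach legs. To do better, I would instead let each agent pick up the message at or near its own starting position and hand it off to the next agent. Concretely, I would again build an intersection graph, but this time of the larger balls $\mathcal{B}(p_i, \gamma B_i / 2)$ (for the $\gamma$ we are aiming for), together with $\mathcal{B}(s,0)$ and $\mathcal{B}(t,0)$; if there is no $s$--$t$ path I output infeasibility (justified exactly as before, since a feasible non-augmented schedule yields such a path even in the smaller balls). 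Otherwise I take a shortest path through agents $1, \dots, \ell$ and must exhibit a feasible augmented schedule.

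The key idea for the schedule is to place handover points \emph{close to the starting positions} rather than at arbitrary intersection points. I would have agent $i$ receive the message at a point $h_{i-1}$ and deliver it to $h_i$, but chosen so that agent $i$'s two "commute" legs $d(p_i, h_{i-1})$ and $d(h_i, p_i)$ together with the "carry" leg $d(h_{i-1}, h_i)$ sum to at most $\gamma B_i$ with $\gamma = 1 + \tfrac{k-2}{k}\max_{j}\tfrac{B_j}{B_i}$ — wait, more carefully, the bound should be uniform, so I would set $\beta = \max_i B_i$ and argue each agent needs at most $B_i + \tfrac{\ell - 2}{\ell}\,\beta$ or similar, then bound $\ell \le k$ to get the stated factor. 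The mechanism: the first agent carries from $s$ using essentially its whole budget in one direction (it starts at $p_1$, walks back toward $s$, picks up, and carries forward — but if $p_1$ is already near $s$ this is cheap); symmetrically for the last agent and $t$; the intermediate agents each only need to bridge the gap between consecutive balls. The crucial accounting is that the total "extra" distance beyond each agent's own $B_i$, summed over the chain, telescopes, because consecutive balls overlap — the overlap of $\mathcal{B}(p_i,\cdot)$ and $\mathcal{B}(p_{i+1},\cdot)$ lets us choose $h_i$ so that the slack used by agent $i$ on its right is compensated by slack available to agent $i+1$ on its left. Distributing the total required augmentation $\approx (\ell-2)\beta$ evenly over the $\ell$ agents gives each agent an overhead of $\tfrac{\ell-2}{\ell}\beta \le \tfrac{k-2}{k}\beta$, hence augmentation factor $1 + \tfrac{\ell-2}{\ell}\cdot\tfrac{\beta}{B_i} \le 1 + \tfrac{k-2}{k}\max_j\tfrac{B_j}{B_i}$ per agent.

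Making "distribute evenly" precise is the main obstacle. The naive greedy placement of handover points (push each $h_i$ as far right as agent $i$ can reach) may force the very last agent to absorb all the accumulated deficit, which is too much. I expect the fix is a balancing/amortization argument: first compute, for the chosen chain, the minimal total budget $\sum_i B_i + D$ needed where $D$ is the total length that must be "double-covered" at handovers; observe $D$ is at most the sum of $\ell - 1$ overlap-gaps, each at most $\beta$ in a suitable sense, giving $D \le (\ell - 2)\beta$ (the $-2$ coming from the zero-radius balls at $s$ and $t$ contributing no overhead, or from a more careful count of which handovers actually cost extra); then redistribute so that no single agent carries more than its share. A clean way to implement the redistribution is to set up the handover points by solving, along the path, a simple system that equalizes the overhead — essentially a water-filling computation, which is polynomial-time. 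I would also need the routine check that the shortest $s$--$t$ path in the intersection graph has $\ell \le k$ agents (immediate, since agents appear at most once) and that all the geometric primitives (ball computation, intersection testing, handover-point selection inside $\mathcal{B}_i \cap \mathcal{B}_{i+1}$) run in polynomial time, which follows from the Dijkstra-based representation described in the Preliminaries.
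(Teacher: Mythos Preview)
Your high-level intuition---that the endpoints of the chain are ``cheap'' and their slack can be redistributed toward the middle---is correct, but the concrete plan has two genuine gaps.

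First, enlarging the balls to $\mathcal{B}(p_i,\gamma B_i/2)$ goes in the wrong direction. If you build the intersection graph from these larger balls, the handover points $h_i$ may sit at distance up to $\gamma B_i/2$ from $p_i$, so agent~$i$'s tour $p_i\to h_{i-1}\to h_i\to p_i$ can cost up to $2\gamma B_i$, not $\gamma B_i$; no redistribution fixes this. The paper keeps the \emph{original} balls $\mathcal{B}(p_i,B_i/2)$ for all intermediate agents, so each handover point is within $B_i/2$ of $p_i$ and the base cost is at most $2B_i$---that is where the factor~$2$ in Theorem~\ref{thm:2-resource-augmentation} comes from, and the present theorem improves \emph{below} it.

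Second, and more importantly, you are missing the mechanism that actually produces the ``$-2$''. It does not come from the zero-radius balls at $s$ and $t$; it comes from treating the first and last agents specially. The paper \emph{guesses} which agents $a,b$ are first and last (trying all $\binom{k}{2}$ pairs) and replaces their balls by the \emph{ellipsoids} $\mathcal{E}(p_a,s,B_a)$ and $\mathcal{E}(p_b,t,B_b)$. By definition of the ellipsoid, the tour $p_a\to s\to h_1\to p_a$ costs at most $B_a$ for any $h_1$ in the ellipsoid, so agent~$a$ (and symmetrically~$b$) needs only its original budget. This is the source of the surplus. The redistribution is then made completely explicit: agent~$1$ carries past $h_1$ by a $\tfrac{\ell-2}{\ell}$-fraction of $d(h_1,p_2)$ toward $p_2$; agent~$i$ (for $2\le i\le\lfloor\ell/2\rfloor$) picks up at a $\tfrac{2(i-1)}{\ell}$-fraction of the way from $p_i$ toward $h_{i-1}$ and drops off a $\tfrac{\ell-2i}{\ell}$-fraction past $h_i$ toward $p_{i+1}$; and symmetrically from the other end. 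A direct calculation then bounds each agent's cost by $B_i+\tfrac{\ell-2}{\ell}\max\{B_{i-1},B_i,B_{i+1}\}$. Your ``water-filling'' sketch does not supply either the ellipsoid idea or these fractions, and without them the accounting you propose yields at best $1+\max_j B_j/B_i$, not $1+\tfrac{k-2}{k}\max_j B_j/B_i$.
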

\begin{proof} 
	We first ``guess'' the first agent $a$ and the last agent $b$ of the feasible schedule (by trying all ${k \choose 2}$ pairs). 
	In contrast to Theorem~\ref{thm:2-resource-augmentation}, we can in this way get a $2$-resource-augmented solution in which $a$ and $b$ only need their original budgets.
	Intuitively, we can evenly redistribute the remaining part of $\hat{B}_a$ and $\hat{B}_b$ among all $k$ agents, such that for each agent $i$ we have $\hat{B}_i \leq B_i+\tfrac{k-2}{k} \max B_j$.
	Without loss of generality, we assume that agent $a$ walks from its starting position on a shortest path to $s$ to pick up the message, 
	and that agent $b$ walks home directly after dropping the message at $t$. 
	Hence consider the ellipsoids $\mathcal{B}_a := \mathcal{E}(p_{a},s,B_{a})$ and $\mathcal{B}_b := \mathcal{E}(p_{b},s,B_{b})$ as well as the balls $\mathcal{B}_i := \mathcal{B}(p_i, \tfrac{B_i}{2})$ around the starting positions of all other agents
	and compute their intersection graph $G_I$.

	We denote by $i = 1,\ldots,\ell$ the agents on a shortest path from $\mathcal{B}_a$ to $\mathcal{B}_b$ in $G_I$ (if any), where $a=1,\ b=\ell\leq k$ and
	specify the following points: $h_0 = s$, $h_i \in \mathcal{B}_i \cap \mathcal{B}_{i+1}$, and $h_{\ell} = t$.
	If the agents handover the message at the locations $h_i$, we get a $2$-resource-augmentation where the agents $1$ and $\ell$ use only their original budget. 
	Instead we let them help their neighbours $2$ and $\ell-1$ by $\tfrac{\ell-2}{\ell}B_2$ and $\tfrac{\ell-2}{\ell}B_{\ell-1}$, respectively. 
	Those agents further propagate the surplus towards the agent(s) in the middle,
	see Figure~\ref{fig:resource-augmentation-example} (right).
	
	\begin{figure}[ht!]
		\centering
		\includegraphics[width=\linewidth]{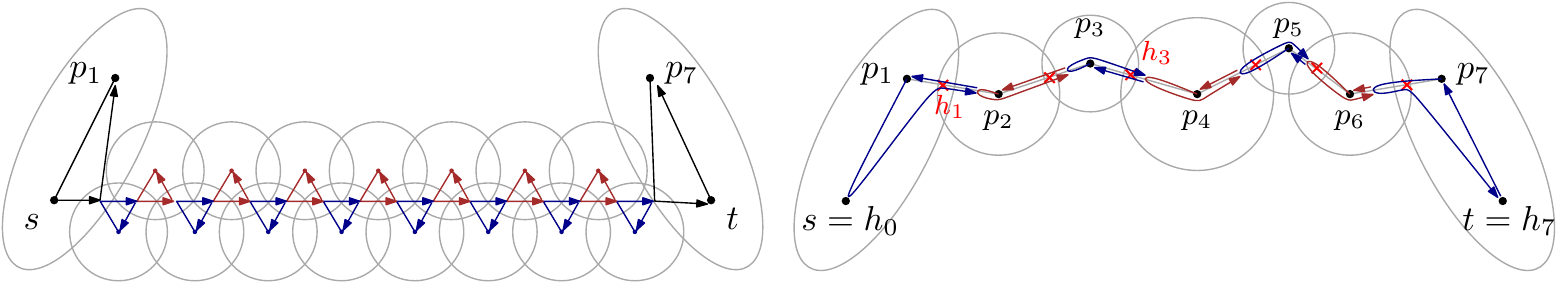}
		\caption{(left) Feasible schedule. (right) Schedule with $\left(1+ \tfrac{5}{7}\max \tfrac{B_j}{B_i} \right)$-resource-augmentation.}
		\label{fig:resource-augmentation-example}
	\end{figure}
	\noindent Specifically, we let the agents move as follows:
	
	Agent $1$ goes to $h_0$ to pick up the message and then goes on to $h_1$. Then he moves towards $p_2$ along the shortest path from $h_1$ to $p_2$ by a $\tfrac{\ell-2}{ell}$-fraction of $d(h_1,p_2)$, 
	drops off the message and returns home. The budget needed to do so is at most $d(p_1,h_0)+d(h_0,h_1)+2\tfrac{\ell-2}{\ell}d(h_1,p_2)+d(h_1,p_1) \leq B_1 + \tfrac{\ell-2}{\ell}B_2$.
	Agents $i=2,\ldots,\lfloor\tfrac{\ell}{2}\rfloor$ get help from their preceding agent and thus can help the following agent:
	Agent $i$ walks from its starting position $p_i$ by a $\tfrac{2(i-1)}{\ell}$-fraction towards $h_{i-1}$ to pick up the message and then returns home.
	Then $i$ goes on to the point $h_i$ and from there on by a $\tfrac{\ell-2i}{\ell}$-fraction towards $p_{i+1}$ to drop off the message. Finally, agent $i$ returns home to $p_i$. 
	Since $h_{i-1}, h_{i} \in \mathcal{B}_i$ and $h_i \in \mathcal{B}_{i+1}$, the budget needed by agent $i$ to do so is at most
	$2\tfrac{2(i-1)}{\ell} d(p_i, h_{i-1}) + 2d(p_i, h_i) + 2\tfrac{\ell-2i}{\ell}d(h_i,p_{i+1}) \leq $
	$\tfrac{2(i-1)}{\ell}B_i + B_i + \tfrac{\ell-2i}{\ell}B_{i+1} \leq B_i + \tfrac{\ell-2}{\ell}\max\left\{ B_i, B_{i+1} \right\}$.
	Agents $i=\lceil\tfrac{\ell+2}{2}\rceil, \ldots, \ell$ help in the same way their preceding agent, hence they need a budget of at most $B_i + \tfrac{\ell-2}{\ell}\max\left\{ B_{i-1}, B_i \right\}$. 
	If $\ell$ is odd there is an additional middle agent $i=\tfrac{\ell+1}{2}$ who needs a budget of at most $2\tfrac{\ell-1}{\ell}d(p_i,h_{i-1}) + 2\tfrac{\ell-1}{\ell}d(p_i,h_{i+1}) \leq 1+\tfrac{\ell-2}{\ell}B_i$.
	
	Hence we achieve a resource augmentation of $1+ \smash{\tfrac{\ell-2}{\ell}\max \tfrac{B_j}{B_i}} \leq 1+ \smash{\tfrac{k-2}{k}\max \tfrac{B_j}{B_i}}$.
\end{proof}

\section{Hardness for Planar Graphs}
\label{sec:planar-hardness}
In this section, we show that \ourProblem in a planar graph is strongly \NP-hard, both for the \returning version and the \nonreturning version. 
Both proofs are based on the same reduction from \textsc{Planar3SAT}.
%
\subparagraph{Planar 3SAT.} Let $F$ be a conjunctive normal form 3CNF with a set of variables $V=\left\{ v_1, \dots, v_x \right\}$ and a set of clauses $C=\left\{ c_1, \dots, c_y \right\}$.
Each clause is a disjunction of at most three literals $\ell(v_i) \vee \ell(v_j) \vee \ell(v_k)$, where $\ell(v_i) \in \left\{ v_i, \overline{v_i} \right\}$.
We can represent $F$ by a graph $H(F) = (B \cup V,A_1 \cup A_2)$ which we build as follows: 
We start with a bipartite graph with the node set $N$ consisting of all clauses and all variables and an edge set $A_1$ which contains an edge between each clause $c$ and variable $v$ if and only if $v$ or $\overline{v}$ is contained in $c$, $A_1 = \left\{ \left\{ c_i, v_j \right\} \ | \ v_j \in c_i \text{ or } \overline{v_j} \in c_i \right\}$.
To this graph we add a cycle $A_2$ consisting of edges between all pairs of consecutive variables, $A_2 = \left\{ \left\{ v_j, v_{j+1} \right\} \ | \ 1\leq j < x \right\} \cup \left\{ v_x, v_1 \right\}.$
We call $F$ \emph{planar} if there is a plane embedding of $H(F)$ which \emph{at each variable node} has all paths representing positive literals on one side of the cycle $A_2$ and all paths representing negative literals on the other side of $A_2$.
The decision problem \textsc{Planar3SAT} of finding whether a given planar 3CNF $F$ is satisfiable or not is \NP-complete, a result due to Lichtenstein~\cite{planar3sat82}.
We assume without loss of generality that every clause contains at most one literal per variable.
For an example of such an embedding, see Figure~\ref{fig:planar3sat} (left).

\subparagraph{Building the Delivery Graph.} 
We first describe how to turn a plane embedding of a planar 3CNF graph $H(F)$ into a delivery graph $G(F)$, see Figure~\ref{fig:planar3sat}. 
Only later we will define edge weights, the agents' starting positions and their energy budgets.
We will focus on \returning \ourProblem; the only difference for non-returning agents lie in their budgets, we provide adapted values for non-returning agents in footnotes.

\begin{figure}[t!]
	\centering
	\includegraphics[width=\linewidth]{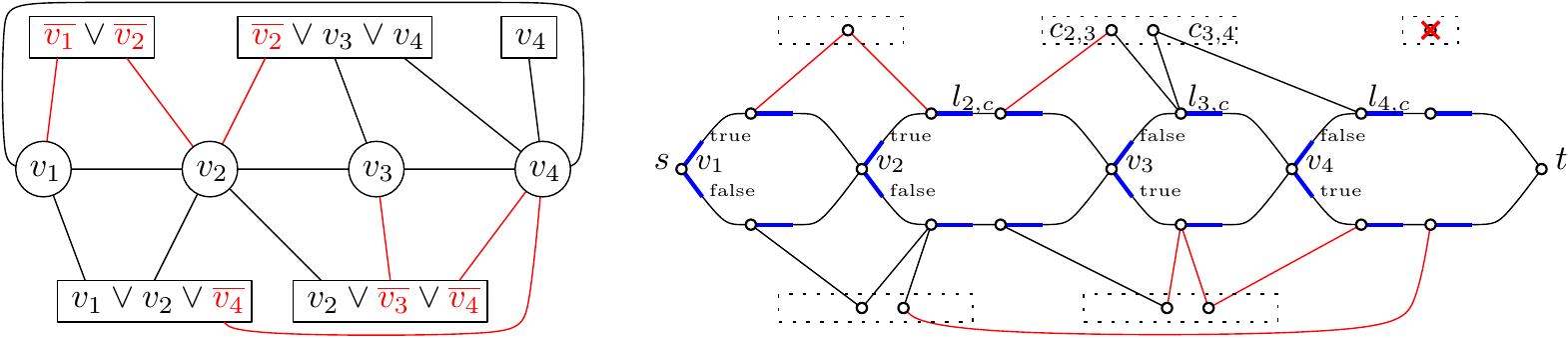}
	\caption{(left) A plane embedding of a 3CNF $F$ which is satisfied by
		$(v_1,v_2,v_3,v_4) = (\true, \false, \false, \true)$.
		(right) Its transformation to the corresponding delivery graph.}
	\label{fig:planar3sat}
\end{figure}

We transform the graph in four sequential steps: 
First we dissolve the edge $\left\{ v_x, v_1 \right\}$ and replace it by an edge $\left\{ v_x, v_{x+1} \right\}$. 
Secondly, denote by $\deg_{H(F),A_1}(v)$ the total number of positive literal edges and negative literal edges adjacent to $v$. 
Then we can ``disconnect'' and ``reconnect'' each variable node $v_i$ ($1\leq i \leq n$) from all of its adjacent clause nodes as follows:
We delete all edges $\left\{ \left\{ v_i, c \right\} \right\} \subseteq A_1$ and split $\left\{ v_i, v_{i+1} \right\}$ into two paths $p_{i,\true}$ and $p_{i,\false}$, on which we place a total of $\deg_{A_1}(v)$ internal \emph{literal nodes} $\mathit{l}_{i,c}$: 
If $v_i$ is contained in a clause $c$ -- and thus we previously deleted $\left\{ v_i, c \right\}$ -- we place $\mathit{l}_{i,c}$ on $p_{i,\false}$ and ``reconnect'' the variable by adding an edge between $\mathit{l}_{i,c}$ and the clause node $c$.
Else if $\overline{v}$ is contained in $c$ we proceed similarly (putting the node $\mathit{l}_{i,c}$ on $p_{i,\true}$ instead).
As a third step, depending on the number of literals of each clause $c$, we may modify its node:
If $c$ contains only a single literal, we delete the $c$ node. 
If $c$ contains two literals $\ell(v_i), \ell(v_j)$, we rename the node to $c_{i,j}$. 
If $c$ is a disjunction of three literals $\ell(v_i), \ell(v_j), \ell(v_k)$, we split it into two nodes $c_{i,j}$ (connected to $l_{i,c}$, $l_{j,c}$) and $c_{j,k}$ (connected to $l_{j,c}$, $l_{k,c}$).
Finally, we place the message on the first variable node $s:= v_1$ and set its destination to $t:=v_{x+1}$. 

We remark that all four steps can be implemented such that the resulting delivery graph $G(F)$ is still planar, as illustrated in Figure~\ref{fig:planar3sat}
(in each path tuple $(p_{i,\true},\ p_{i,\false})$ the order of the internal nodes follows the original circular order of adjacent edges of $v_i$, 
and for each clause $c = \ell(v_i) \vee \ell(v_j) \vee \ell(v_k) $ the nodes $c_{i,j}$ and $c_{j,k}$ are placed close to each other). 

\subparagraph{Reduction Idea.} We show that the message can't be delivered via any of the clause nodes. 
Thus the message has to be routed in each path pair $(p_{i,\true},\ p_{i,\false})$ through exactly one of the two paths.
If the message is routed via the path $p_{i,\true}$, we interpret this as setting $v_i = \true$ and hence we can read from the message trajectory a satisfiable assignment for $F$. 

\subparagraph{Agent Placement and Budgets.} We will use greek letters for weights (namely $\zeta$ and $\delta$) when the weights depend on each other or
on the input. We place three kinds of agents on $G$:
\begin{enumerate}
	\item	\emph{Variable agents:} $x$ agents which are assigned to the variable nodes $v_1, \ldots, v_x$. 
		These agents will have to decide whether the message is delivered via $p_{i,\true}$ or via $p_{i,\false}$, thus setting the corresponding variable to true or to false. 
		We give all of them a budget of $2\zeta$.\footnote{In the \nonreturning version we want agents to have the same ``range'', hence we set their budget to $\zeta$.}
	\item	\emph{Clause agents:} One agent per created clause \emph{node}, e.g. a clause $c$ containing three literals gets two agents, one in each of the two clause nodes. 
		We think of these agents as follows: If in $c = \ell(v_i) \vee \ell(v_j) \vee \ell(v_k) $ the literal $\ell(v_j)$ is false, 
		then clause $c$ needs to send one of its agents down to the corresponding path node $l_{j,c}$ to help transporting the message over the adjacent ``gap'' of size $\zeta$ 
		(depicted blue in Figures~\ref{fig:planar3sat} (right),~\ref{fig:placement_weights}).
		A 3CNF $F$ will be satisfiable, if and only if no clause needs to spend more agents than are actually assigned to it respectively its node(s) in $G(F)$.
		We give all clause agents a budget of $2\cdot(1+\zeta)$.\footnote{In the \nonreturning version we assign a budget of $(1+\zeta)$ to clause agents.}
	\item	\emph{Separating agents:} These will be placed in-between the agents defined above, to ensure that the variable and clause agents
		actually need to solve the task intended for them (they should not be able to deviate and help out somewhere else -- not even their
		own kind). The separating agents will be placed in pairs inside \emph{$\delta$-tubes}, which we define below. 
\end{enumerate}

\begin{remark}
Strictly speaking, a reduction without variable agents works as well. In terms of clarity, we like to think of variable agents as the ones setting the variables to $\true$ or $\false$.
\end{remark}
\subparagraph{$\delta$-Tubes.} We call a line segment a \emph{$\delta$-tube} if it satisfies the following four properties: (i) It has a length of
$\delta$. (ii) It contains exactly two agents which both have budget at most $\delta$. (iii) Neither agent has enough energy to leave the line segment
on the left or on the right by more than a distance of $\frac{\delta}{3}$. (iv) The agents can collectively transport a message through the line
segment from left to right.\\
$\delta$-tubes exist for both \ourProblem versions, examples are given in Figure~\ref{fig:placement_weights}~(left). 
The reader may think of these examples, whenever we talk about $\delta$-tubes.
\subparagraph{Edge Weights.} We define edge weights on our graph $G(F)$ as follows: All edges between clause nodes and internal path nodes get weight~$1$ (in particular
this means that if a clause agent walks to the path, it has a remaining range of $\zeta$). Each path consists of alternating pieces of length $\zeta$
and of $\delta$-tubes. We choose $\delta := \tfrac{4\zeta}{3} > \zeta$. This means that neither variable nor clause agents can cross a $\delta$-tube
(because their budget is not sufficiently large). Furthermore the distance any separating agent can move outside of its residential tube is at most
$\tfrac{\delta}{3} = \tfrac{4\zeta}{9} < \tfrac{\zeta}{2}$. In particular separating agents are not able to collectively transport the message over a
$\zeta$-segment, since from both sides they are not able to reach the middle of the segment to handover the message. At last we set $\zeta :=
\tfrac{1}{8}$.

\begin{figure}[t!]
	\centering
	\includegraphics[width=\linewidth]{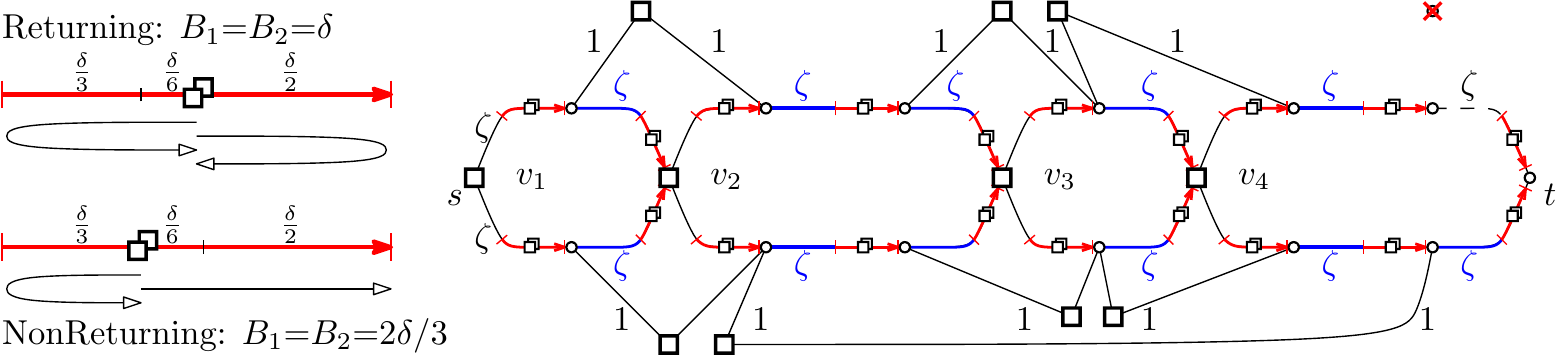}
	\caption{(left) Two examples of $\delta$-tubes for both versions of \ourProblem.\newline
		(right) Agent placement and edge weights on $G(F)$; agents are depicted by squares.}
	\label{fig:placement_weights}
\end{figure}

\begin{restatable}[Returning BudgetedDelivery]{lemma}{lemnphardness}
	A planar 3CNF $F$ is satisfiable if and only if it is possible to deliver a message from $s$ to $t$ in the corresponding
	delivery graph $G(F)$, such that all agents are still able to return to their starting points in the end.
	\label{lem:budget-with-return}
\end{restatable}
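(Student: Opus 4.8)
The plan is to prove the two implications separately, with essentially all the difficulty in the ``delivery feasible $\Rightarrow$ $F$ satisfiable'' direction. For the forward direction, given a satisfying assignment $\sigma$ of $F$, I would write down an explicit returning schedule: the message travels along the spine $v_1=s,v_2,\ldots,v_{x+1}=t$, and between $v_i$ and $v_{i+1}$ it is routed through $p_{i,\true}$ if $\sigma(v_i)=\true$ and through $p_{i,\false}$ otherwise. On the chosen path the two separating agents of each $\delta$-tube carry the message through that tube (property~(iv) of $\delta$-tubes); the variable agent at $v_i$ uses its budget $2\zeta$ to carry the message across the single $\zeta$-segment incident to $v_i$ and walk back; and for every literal node $l_{i,c}$ on the chosen path a clause agent residing in the clause node of $c$ incident to $l_{i,c}$ walks down its weight-$1$ edge, ferries the message across the adjacent $\zeta$-gap (cost $2\zeta$), and returns — a round trip of cost exactly $2(1+\zeta)$. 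The only thing I would have to check is that the clause agents can be matched to the literal nodes they must serve: $l_{j,c}$ lies on a traversed path precisely when the corresponding literal of $c$ is false under $\sigma$, so at most $|c|-1$ such nodes arise per clause, and using that $l_{i,c}$ is adjacent only to $c_{i,j}$, $l_{k,c}$ only to $c_{j,k}$, and $l_{j,c}$ to either, one verifies these nodes can each be assigned a distinct clause agent; a one-literal clause has no clause node and places its literal node on the path $\sigma$ does not use, so it imposes no demand. Then every agent stays within its original budget and returns home, and the message reaches $t$.

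For the converse I would first establish the structural claim underlying the whole reduction: in any feasible schedule the message never transits between two $v$-paths through a clause node. The argument is that the weight choices force the only agent that can ever reach a clause node to be the one residing there (a separating agent can leave its tube by less than $\tfrac{\zeta}{2}$; a variable agent cannot cross a $\delta$-tube and return; no clause agent can reach a foreign clause node and return), and that once the message is dropped at a clause node $c_{i,j}$ its resident agent has only $2\zeta<1$ budget left, too little to carry the message to the other incident literal node or to any point another agent can reach, so the message would be stranded. Given this, the message must, for each $i$, travel from $v_i$ to $v_{i+1}$ entirely inside one of $p_{i,\true},p_{i,\false}$; I set $\sigma(v_i)$ accordingly. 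Every $\zeta$-gap on each traversed path is then crossed, and the same weight estimates show a $\zeta$-gap incident to a literal node $l_{j,c}$ can be crossed only with the help of a clause agent of $c$ incident to it, and that helping at one gap consumes that agent's entire budget. A counting argument closes it: a three-literal clause owns two clause agents and — again using the adjacency pattern of $c_{i,j}$ and $c_{j,k}$ — cannot serve three false literals; a two-literal clause owns one agent and cannot serve two; a one-literal clause owns none, so its literal node must not lie on a traversed path. In every case $c$ has a literal true under $\sigma$, so $\sigma$ satisfies $F$.

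The hard part is precisely this structural claim. I expect the proof to spend its effort ruling out every conceivable cheat — routing the message up into a clause node and out through a different literal node, or crossing a $\zeta$-gap with an unintended coalition of agents (two separating agents, or a separating agent together with a clause or variable agent) — and this is exactly where the quantitative slack in the construction is spent ($\delta=\tfrac{4\zeta}{3}$, clause budget $2(1+\zeta)$, separating agents reaching only $\tfrac{\delta}{3}<\tfrac{\zeta}{2}$, and finally $\zeta=\tfrac18$). The bookkeeping has to be carried out case by case for each agent type against each local gadget (variable node, literal node, $\delta$-tube, clause node); by comparison the forward direction is a routine budget check once the schedule above is written down.
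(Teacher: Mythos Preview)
Your proposal is correct and follows essentially the same route as the paper: the forward direction is exactly the explicit schedule the paper gives (Proposition~\ref{prop:SAT-schedule}), and the converse is organized around the same two structural facts---the message cannot transit a clause node (Proposition~\ref{prop:np-no-clause}) and each $\zeta$-gap forces the use of its designated variable or clause agent (Proposition~\ref{prop:np-gaps})---followed by the same counting argument over clauses. One small numerical looseness: when you argue the resident clause agent has ``only $2\zeta$ budget left'' at $c_{i,j}$, you are implicitly assuming it fetched the message from $l_{i,c}$ itself; in fact it could pick up as close as $1-\zeta$ from $c_{i,j}$ and have $4\zeta$ left, but since $4\zeta < 2(1-\zeta)$ for $\zeta=\tfrac18$ your conclusion still stands (the paper sidesteps this by directly comparing the required $4(1-\zeta)$ against the budget $2(1+\zeta)$).
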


\begin{proof}[Proof (Sketch)]
	See Appendix~\ref{app:np-hardness} for a detailed proof.

	``$\Rightarrow$'' The schedule is straightforward: Each \emph{variable agent} chooses, according to the assignment to $v_i$, the $\true$-path $p_{i,\true}$ or the $\false$-path $p_{i,\false}$.
	\emph{Separating agents} and \emph{clause agents} help wherever they are needed.

	``$\Leftarrow$'' One can show that the message cannot be delivered via any clause node. Hence we set $v_i = \true$ if and only if the message moves through $p_{i,\true}$.
	Now, each clause must have one satisfied literal, otherwise its agents could not have helped to bridge all $\zeta$-segments.

\end{proof}

\noindent The same arguments work for \nonreturning \ourProblem as well.
Recall that a delivery graph $G(F)$ created from a planar 3CNF $F$ is planar. Furthermore the size of $G(F)$, as well as the number of agents we use, is
polynomial in the number of clauses and variables. The agents' budgets and the edge weights are polynomial in $\zeta, \delta$ and thus constant.
Thus Lemma~\ref{lem:budget-with-return} shows \NP-hardness of \ourProblem on planar graphs. Finally, note that hardness also holds for a \emph{uniform} budget $B$: 
One can simply add an edge of length $(B-B_i)/2$ to the starting location of each agent $i$ and relocate $i$ to the end of this edge.\footnote{We relocate a non-returning agent by adding an edge of length $(B-B_i)$.}
\begin{theorem}[Hardness of BudgetedDelivery]
	Both versions of \ourProblem are strongly \NP-hard on planar graphs, even for uniform budgets.
	\label{thm:budget-hardness}
\end{theorem}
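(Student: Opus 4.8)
The plan is to prove Theorem~\ref{thm:budget-hardness} as a direct corollary of Lemma~\ref{lem:budget-with-return} together with the observations already assembled just before the theorem statement. The whole strategy is a polynomial-time many-one reduction from \textsc{Planar3SAT}, which is \NP-complete by Lichtenstein~\cite{planar3sat82}, to \ourProblem on planar graphs.

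First I would verify that the map $F \mapsto (G(F), s, t, \text{agents})$ is computable in polynomial time. Here I would go through the four construction steps (dissolving $\{v_x,v_1\}$, disconnecting/reconnecting the variable nodes with their literal nodes, splitting three-literal clause nodes into $c_{i,j}$ and $c_{j,k}$, and fixing $s=v_1$, $t=v_{x+1}$), noting that each step changes the graph by a number of vertices and edges bounded by $\bigO(x+y)$, and that the $\delta$-tubes inserted on the paths each contribute a constant number of vertices and agents; hence $|V(G(F))|, |E(G(F))|$, and the number of agents are all polynomial in $x+y$. I would also recall from the text that $G(F)$ is planar (the embedding is inherited from the plane embedding of $H(F)$, placing positive-literal nodes on one side and negative-literal nodes on the other, and placing $c_{i,j}, c_{j,k}$ next to each other). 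Since $\zeta = \tfrac{1}{8}$ and $\delta = \tfrac{4\zeta}{3} = \tfrac{1}{6}$ are absolute constants, all edge weights and all agent budgets are constants, so the numbers in the instance have encoding length $\bigO(1)$ and the instance has polynomial total size; in particular the hardness is \emph{strong} \NP-hardness (indeed the weights do not even grow with the input). Second, Lemma~\ref{lem:budget-with-return} gives exactly the correctness of the reduction for \returning \ourProblem: $F$ is satisfiable if and only if the constructed instance is a YES-instance. The text also asserts that the same arguments apply verbatim to \nonreturning \ourProblem once the footnoted budget values are used, so the same reduction (with those budgets) handles the non-returning case.

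Finally, to get the ``even for uniform budgets'' strengthening, I would apply the gadget described in the last sentence before the theorem: for each agent $i$ with budget $B_i < B := \max_j B_j$, attach a new pendant edge of length $(B-B_i)/2$ at $p_i$ and move agent $i$ to the far endpoint of that edge, raising every budget to the common value $B$ (for the non-returning version the pendant edge has length $B-B_i$ instead). I would observe that this transformation is polynomial, preserves planarity (attaching a pendant edge to a planar graph keeps it planar), and does not change feasibility in either variant: in the returning version agent $i$ must traverse the pendant edge out and back, spending exactly $B-B_i$, and is then left with precisely $B_i$ usable budget on the original graph; in the non-returning version it spends exactly $B-B_i$ walking in, again leaving $B_i$, and there is never any advantage to an agent re-entering its private pendant edge since nothing else is there. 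Hence YES/NO status is preserved, and \ourProblem remains strongly \NP-hard on planar graphs with a uniform budget. Chaining these three ingredients — polynomiality and planarity of $G(F)$, the equivalence from Lemma~\ref{lem:budget-with-return} (and its non-returning analogue), and the uniform-budget gadget — yields the theorem.

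I do not expect a serious obstacle here, since the real work is done in Lemma~\ref{lem:budget-with-return}; the only points that need a little care are (a) checking that all four construction steps and the uniform-budget gadget genuinely preserve planarity, and (b) being explicit that ``strongly \NP-hard'' is justified because the edge weights and budgets are constants (so there is no hidden dependence on the magnitude of numerically encoded values), which is immediate once one notes $\zeta$ and $\delta$ are fixed constants.
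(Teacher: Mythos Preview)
Your proposal is correct and follows essentially the same approach as the paper: derive the theorem from Lemma~\ref{lem:budget-with-return} (and its non-returning analogue) by checking polynomial size, planarity, and constant edge weights/budgets for strong \NP-hardness, then apply the pendant-edge relocation gadget for uniform budgets. Your write-up is in fact more explicit than the paper's in justifying that the uniform-budget gadget preserves both planarity and feasibility.
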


\section{Hardness of Resource Augmentation}
\label{sec:resource-augmentation-hardness}
\subparagraph{Main Ideas.} We show that for all $\varepsilon>0$, there is no polynomial-time $(2-\varepsilon)$-resource-augmented algorithm for \returning \ourProblem, unless $\mathrm{P}=\mathrm{NP}$. 
The same holds for $(3-\varepsilon)$-resource-augmentation for the \nonreturning version.  
Intuitively, an algorithm which finds out how to deliver the message with resource-augmented agents will at the same time solve 3SAT. We start by taking the reduction
from \textsc{Planar3SAT} from Section~\ref{sec:planar-hardness}. However, in addition to the previous delivery graph construction $G(F)$,
we need to replace the $\delta$-tubes and $\zeta$-segments in order to take care of three potential pitfalls. We illustrate the modification into the
new graph $G'(F)$ in Figure~\ref{fig:replacements}:
\begin{enumerate} 
	\item	In a resource-augmented setting, $\delta$-tubes are no longer able to separate the clause and variable agents: 
		These agents might be able to cross the $\delta$-tube, or the separating agents residing inside the $\delta$-tube can help out in the $\zeta$-segments
		(there is no value for $\delta$ to prevent both). We will tackle this issue below by replacing $\delta$-tubes by a chain of logarithmically
		many tubes with exponentially increasing and decreasing $\delta$-values.
	\item	In the reduction for the original decision version of \ourProblem, a clause $c$ with three literals gave rise to two clause nodes $c_{i,j},\, c_{j,k}$ that were adjacent to the same path node $l_{j,c}$. 
		Hence the agent on $c_{i,j}$, now with resource-augmented budget, could pick up the message at $l_{j,c}$ and bring it close to the second resource-augmented agent stationed at $c_{j,k}$. 
		This agent then might transport the message via its own clause node to the distant literal node $l_{k,c}$. 
		To avoid this, we replace every $\zeta$-segment adjacent to such a ``doubly'' reachable path node $l_{j,c}$ by two small parallel arcs. 
		Both arcs contain exactly one $\zeta$-segment, reachable from only one clause node (the message can then go over either arc),
		as well as a chain of tubes to provide the necessary separation.
	\item	A single clause agent stationed at $c_{i,j}$ might retrieve the message from the first literal node $l_{i,c}$, walk back to its origin and then on to the second literal $l_{j,c}$, 
		thus transporting the message over a clause node. This can always be done by 2-resource-augmented agents; however for
		$(2-\varepsilon)$-resource-augmentation we can prevent this by carefully tuning the weights of the $\zeta$-segments, e.g. such that
		$(2-\varepsilon)\cdot (1+\zeta) \ll 2$.\footnote{Non-returning clause agents can do this if they are $3$-resource-augmented; and we can
		prevent it for $(3-\varepsilon)$-resource-augmentation by setting $\zeta$ such that $(3-\varepsilon)\cdot (1+\zeta) \ll 3$ (in fact
		the value of $\zeta$ will be the same as for \returning \ourProblem, but we will use different bounds in the proof).}
\end{enumerate}
We now give a more formal description of the ideas mentioned above. Recall that a $\delta$-tube had length $\delta$ and contained two agents with
budget at most $\delta$ each. If these agents are now $\gamma$-resource-augmented, $\gamma < 3$, they can move strictly less than $3\delta$ to the right
or to the left of the $\delta$-tube. In the following we want to uncouple the length of the line segment from the range the agents have left to move
on the outside of the line segment.

\begin{figure}[t!]
	\centering
	\includegraphics[width=\linewidth]{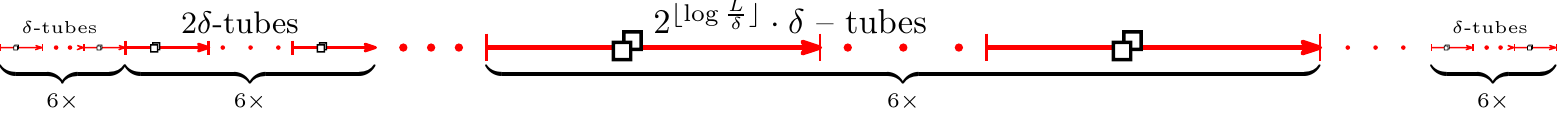}
	\caption{$L$-$\delta$-chains consist of blocks of 6 tubes of exponentially increasing and decreasing size.}
	\label{fig:x-delta-chain}
\end{figure}
\subparagraph{$L$-$\delta$-Chains.} We call a line segment an \emph{$L$-$\delta$-chain} if it satisfies the following three properties: (i) Its length is
at least $L$ (a constant). (ii) No  $\gamma$-resource-augmented agent ($1 \leq \gamma < 3$) contained in the chain has enough energy to leave the line
segment by $3\delta$ or more. (iii) The agents contained in the chain can collectively transport a message through the line segment from left to right
(already with their original budget).\\
We can create $L$-$\delta$-chains for both \ourProblem versions simply by using the respective \emph{$\delta$-tubes} as a blackbox: 
We start our line segment by adding a block of six $\delta$-tubes next to each other, followed by a block of six $2\delta$-tubes, a block of six $4\delta$-tubes and so on
until we get a block of length at least $6\cdot 2^{\lfloor \log L/\delta \rfloor}\cdot \delta > L$. The same way we continue to add
blocks of six tubes with lengths decreasing by powers of 2, see Figure~\ref{fig:x-delta-chain}. Obviously properties (i) and (iii) are satisfied. To
see (ii), note that any agent contained in the first or last block of $\delta$-tubes cannot leave its tube (and thus the $L$-$\delta$-chain)
by $3\delta$ or more. On the other hand, none of the inner blocks' agents is able to even cross the preceeding or the following block of six tubes,
since their total length is larger than its budget.

\subparagraph{Arc Replacement of $\zeta$-Segments.} 
Next we decouple any pair of clause agents (stationed at nodes $c_{i,j},\, c_{j,k}$) that can directly go to the same literal node $l_{j,c}$ 
(so as not to allow them to transport the message via clause node with their augmented budgets, depicted in red in Figure~\ref{fig:replacements} (left)).
We replace the adjacent $\zeta$-segment by two small arcs which represent alternative ways over which the message can be transported.
Each arc consists of one $L$-$\delta$-chain and of one $\zeta$-segment, see Figure~\ref{fig:replacements}.\\
The \emph{inner arc} begins with the $\zeta$-segment -- whose beginning $l_{j,c}^i$ can be reached through an edge of length $1$ by the first clause agent (stationed at $c_{i,j}$) --
and ends with the $L$-$\delta$-chain.
The \emph{outer arc} first has the $L$-$\delta$-chain and then the $\zeta$-segment. The node in between these two parts, denoted by $l_{j,c}^k$, 
is connected via an edge of length $1$ to the second clause agent's starting position $c_{j,k}$.\\

We conclude the replacement with three remarks: Firstly, it is easy to see that the described operation respects the planarity of the graph. Secondly, we
are able to give values for $L$ and $\delta$ in the next subparagraph such that a single clause agent is still both necessary and (together with
agents inside the newly created adjacent $L$-$\delta$-chain) sufficient to transport a message over one of the parallel arcs from left to right. Finally, the
clause agent starting at $c_{i,j}$ is no longer able to meet the clause agent starting at $c_{j,k}$.

\begin{figure}[t!]
	\centering
	\includegraphics[width=\linewidth]{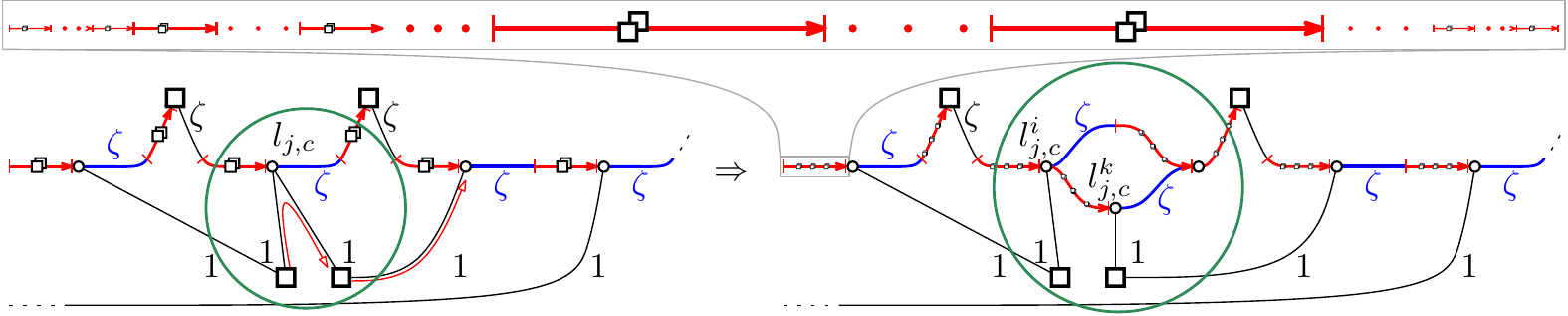}
	\caption{(top-to-bottom) We replace $\delta$-tubes in $G(F)$ by $L$-$\delta$-chains in $G'(F)$.\newline 
	(left-to-right) We replace each $\zeta$-segment connected to two clause agents by two parallel arcs.}
	\label{fig:replacements}
\end{figure}

\subparagraph{Budgets and Edge Weights.} Recall that our agents have the following budgets: \emph{separating agents} have a budget according to their
position in the $L$-$\delta$-chain, \emph{variable agents} a budget of $2\zeta$ and \emph{clause agents} a budget of $2(1+\zeta)$.\footnote{In the
\nonreturning version, variable agents have a budget of $\zeta$ and clause agents a budget of $1+\zeta$.} Now these budgets are
$\gamma$-resource-augmented, with $\gamma < 3$. We would like to prevent clause and variable agents from crossing $L$-$\delta$-chains or even meeting
inside of them, hence we set $L:= 9$, which shall exceed the augmented budget of every agent by a factor of more than 2. Furthermore we
don't want separating agents to help out too much outside of their residential chain, hence we set $\delta := \tfrac{\zeta}{9}$. A resource-augmented
separating agent can thus walk only as far as $3\delta = \tfrac{\zeta}{3}$ to the outside of the tube. In particular, separating agents cannot
transport the message over a $\zeta$-segment.

Next we choose $\zeta$ such that an augmented clause agent stationed at a clause node $c_{i,j}$ is not able to transport the message from $l_{i,c}$ to $l_{j,c}$, 
not even in collaboration with the separating agents that can reach the two literal nodes. 
We set $\zeta := \smash{\tfrac{\varepsilon}{6-\varepsilon}}$. 
The edges $\left\{ c_{i,j},\, l_{i,c} \right\},\, \left\{ c_{i,j},\, l_{j,c} \right\}$ have length $1$. 
In each edge, separating agents can help by at most $3\delta = \tfrac{\zeta}{3}$, leaving at least a distance of $1-\tfrac{\zeta}{3}$ for the clause agent to cover.
First note that for $0 < \varepsilon < 1$, we have $\zeta = \smash{\tfrac{\varepsilon}{6-\varepsilon} < \tfrac{\varepsilon}{5} < \tfrac{2\varepsilon}{3}}$ and $(6-\varepsilon) > 3(2-\varepsilon)$.
Hence a $\gamma$-resource-augmented clause agent has a budget of only
$\gamma\cdot 2(1+\zeta) = 2(2-\varepsilon)(1+\zeta) = 2( 2-\varepsilon+\tfrac{(2-\varepsilon)\varepsilon}{6-\varepsilon} ) < 2(2-\tfrac{2\varepsilon}{3}) < 2(2-\zeta) < 4\cdot (1-\tfrac{\zeta}{3})$, 
and thus cannot transport the message via its clause node and return home in the end.\footnote{For non-returning agents we use (for $\varepsilon < 2$) the inequalities: 
	$\zeta = \tfrac{\varepsilon}{6-\varepsilon} < \tfrac{\varepsilon}{4} < \tfrac{\varepsilon}{2}$ and $(6-\varepsilon) > 2(3-\varepsilon)$.
	Hence a non-returning $\gamma$-resource-augmented clause agent has a budget of 
	$\gamma(1+\zeta) = (3-\varepsilon)(1+\zeta) = 3-\varepsilon + \tfrac{(3-\varepsilon)\varepsilon}{6-\varepsilon} < 3-\tfrac{\varepsilon}{2} < 3-\zeta = 3\cdot (1-\tfrac{\zeta}{3})$, 
	and thus cannot transport the message via its clause node.\label{foot:3-eps-bounds}}	

\begin{restatable}[Resource-augmented Returning BudgetedDelivery]{lemma}{lemreshardness}
	A planar 3CNF $F$ is satisfiable if and only if it is possible to deliver a message with $(2-\varepsilon)$-resource-augmented agents from $s$ to $t$ in the corresponding delivery graph $G'(F)$, 
	such that the agents are still able to reach their starting point in the end.
	\label{lem:resource-augmented-budget-with-return}
\end{restatable}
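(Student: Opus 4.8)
The plan is to prove the two implications separately. The ``$\Rightarrow$'' direction is a direct construction that needs only the original (un-augmented) budgets — essentially the schedule behind Lemma~\ref{lem:budget-with-return}, transported to the new gadgets. The ``$\Leftarrow$'' direction — reading a satisfying assignment off a feasible $(2-\varepsilon)$-augmented schedule — carries the weight, and it is here that the constants $L=9$, $\delta=\tfrac{\zeta}{9}$ and $\zeta=\tfrac{\varepsilon}{6-\varepsilon}$ are used. Throughout we take $\varepsilon<1$ (for larger $\varepsilon$ the augmentation factor $2-\varepsilon$ is not $>1$), so $\zeta<\tfrac15$, and the augmented variable and clause budgets, which are at most $(2-\varepsilon)\cdot 2\zeta$ and $(2-\varepsilon)\cdot 2(1+\zeta)<4$ respectively, stay below $\tfrac{L}{2}=\tfrac92$.

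For ``$\Rightarrow$'', fix a satisfying assignment. Each variable agent carries the message across the first $\zeta$-segment of the path $p_{i,\true}$ or $p_{i,\false}$ selected by the assignment and returns home (round trip of length $\tfrac{4\zeta}{3}\le 2\zeta$). Inside every traversed $L$-$\delta$-chain the resident separating agents move the message across; by property~(iii) of $L$-$\delta$-chains this already works with their original budgets. For a $\zeta$-segment sitting at a literal node $l_{i,c}$ whose literal is \emph{false}, a clause agent of $c$ walks out along its length-$1$ edge, carries the message over the middle part of the segment that the two flanking chains cannot reach, and returns home, at total cost $2+\tfrac{4\zeta}{3}\le 2(1+\zeta)$. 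The one thing to check is that no clause spends a clause agent twice: since every clause has a true literal, a short case analysis over clauses with one, two or three literals — for a three-literal clause $\ell(v_i)\vee\ell(v_j)\vee\ell(v_k)$ using the two parallel arcs at the doubly-reachable node $l_{j,c}$ to route the message around whichever of $c_{i,j},c_{j,k}$ is already occupied — shows every demand is met with each clause agent used at most once. Resource augmentation is never needed here; it only adds slack.

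For ``$\Leftarrow$'', assume a feasible $(2-\varepsilon)$-augmented schedule, and collect four facts. \emph{(a)} By property~(ii) of $L$-$\delta$-chains a $(2-\varepsilon)$-augmented separating agent leaves its chain by less than $3\delta=\tfrac{\zeta}{3}$; hence across any $\zeta$-segment the separating agents of the two adjacent chains cover only its outer thirds and can never, alone, hand the message across. \emph{(b)} Since the augmented budget of every variable and clause agent is below $\tfrac{L}{2}$, no such agent can cross an $L$-$\delta$-chain, so none of them can use a chain to help on its far side; in particular a variable agent can serve only the first $\zeta$-segment of its own path pair. \emph{(c)} The message cannot be routed through a clause node: the only agent that could fetch it from the neighbourhood of $l_{i,c}$, carry it through $c_{i,j}$ to the far side of $l_{j,c}$ — which lies in a different path pair — and then return home is a \emph{single} clause agent (the separating agents reach at most $\tfrac{\zeta}{3}$ outside their chains, and after the arc replacement the two clause agents of a three-literal clause can no longer meet), yet even with maximal help this round trip has length at least $4\bigl(1-\tfrac{\zeta}{3}\bigr)$, whereas the augmented clause budget is $2(2-\varepsilon)(1+\zeta)<4\bigl(1-\tfrac{\zeta}{3}\bigr)$ by the choice $\zeta=\tfrac{\varepsilon}{6-\varepsilon}$. \emph{(d)} A clause agent helps at most once, since bridging a single $\zeta$-segment at a literal node and returning home costs more than $2$, so two such detours would cost more than $4>2(2-\varepsilon)(1+\zeta)$.

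From \emph{(a)}--\emph{(c)} the message must pass, between $v_i$ and $v_{i+1}$, through exactly one of $p_{i,\true},p_{i,\false}$ in full (any excursion into the other, or toward a clause node, is a dead end), and I set $v_i=\true$ iff it uses $p_{i,\true}$. A literal node $l_{i,c}$ lies on the path chosen for $v_i$ precisely when $\ell(v_i)$ is false in $c$; by \emph{(a)} and the fact that $l_{i,c}$ hangs off a clause node of $c$ by a length-$1$ edge, the adjoining $\zeta$-segment (or the $\zeta$-segment of whichever of the two arcs the message takes at a doubly-reachable node — the inner one reachable only from $c_{i,j}$, the outer only from $c_{j,k}$) must have been bridged by a clause agent of $c$, and by \emph{(d)} each of the at most two clause agents of $c$ bridged at most one segment. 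The same case analysis now shows that if \emph{every} literal of some clause were false, that clause would have lacked a clause agent to bridge one of its segments — a contradiction; hence every clause has a true literal and $F$ is satisfiable. The main obstacle is this backward direction, and within it the quantitative heart is \emph{(b)}--\emph{(d)}: one must verify that $L=9$, $\delta=\tfrac{\zeta}{9}$ and $\zeta=\tfrac{\varepsilon}{6-\varepsilon}$ simultaneously keep the separating agents inside their chains, make $L$-$\delta$-chains impassable to the other agents, and make both the clause-node shortcut and the reuse of a clause agent strictly too costly even at augmentation $2-\varepsilon$ — all while carefully tracking the gadget boundary cases (which $\zeta$-segment abuts which chain, how much of each length-$1$ clause edge a separating agent can shave off) so that every inequality keeps a strict margin.
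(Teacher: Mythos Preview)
Your proposal is correct and follows essentially the same line as the paper: the forward direction reuses the un-augmented schedule of Lemma~\ref{lem:budget-with-return} transplanted into $G'(F)$, and the backward direction rests on exactly the three quantitative facts the paper isolates --- separating agents leak out of an $L$-$\delta$-chain by less than $3\delta=\tfrac{\zeta}{3}$, variable and clause agents (with augmented budget below $\tfrac{L}{2}$) cannot bridge or meet across a chain, and the choice $\zeta=\tfrac{\varepsilon}{6-\varepsilon}$ makes the clause-node shortcut strictly too expensive. The only organisational difference is that the paper finishes by converting the augmented $G'(F)$ schedule into a non-augmented one, mapping it back to $G(F)$, and invoking Lemma~\ref{lem:budget-with-return} for the counting, whereas you do the clause-agent counting directly in $G'(F)$ via your fact~(d); both routes are valid and yield the same conclusion.
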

\begin{proof}[Proof (Sketch).] 
	We follow the ideas of the proof of Lemma~\ref{lem:budget-with-return}, and use the modifications to the graph structure and the weights presented in this section. See Appendix~\ref{app:resource-augmentation-hardness} for a detailed proof.	
\end{proof}

\noindent The same arguments work for \nonreturning \ourProblem as well,  
if we replace the inequalities for returning $(2-\varepsilon)$-resource-augmented agents with the corresponding inequalities for non-returning $(3-\varepsilon)$-resource-augmented agents, 
given in Footnote~\ref{foot:3-eps-bounds}.

\noindent Compare the new delivery graph $G'(F)$ with the original graph $G(F)$. The only topological changes we introduced with our
replacements were the parallel arcs replacing the $\zeta$-segments reachable by two clause nodes. We have already seen that this change respected the
planarity of the delivery graph. Relevant changes to the edge weights and agent numbers, on the other hand, were added by replacing $\delta$-tubes
with $L$-$\delta$-chains: Each chain consists of blocks of six $\delta$-tubes of exponentially increasing size, hence we need a logarithmic number of
tubes per chain, namely $\bigO\left( \log \tfrac{L}{\delta}\right)$ many. We have fixed the values of $L$ and $\delta$ to $L=9$ and $\delta = \tfrac{\zeta}{9}$.
With $\zeta^{-1} = \tfrac{9}{\varepsilon} - 1 \in \Theta(\varepsilon^{-1})$ we get $\bigO\left( \log \tfrac{L}{\delta}\right) = \bigO(\log(\zeta^{-1})) =
\bigO(\log(\varepsilon^{-1}))$ many agents per chain. The number of chains is clearly polynomially bounded by the number of variables and clauses and the edge
weights depend on $\varepsilon$ only as well. Hence we conclude:

\begin{theorem}[Inexistence of a better resource augmentation for BudgetedDelivery]
	There is no polynomial-time $(2-\varepsilon)$-resource-augmented algorithm for \returning \ourProblem and no
	$(3-\varepsilon)$-resource-augmented algorithm for \nonreturning \ourProblem, unless $\mathrm{P}=\mathrm{NP}$. 
	\label{thm:resource-augmentation-hardness}
\end{theorem}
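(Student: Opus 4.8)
The plan is to assemble pieces that are already in place: the transformation $F\mapsto G'(F)$ described in this section, the equivalence of Lemma~\ref{lem:resource-augmented-budget-with-return}, and the size estimate derived immediately above the theorem. First I would fix an arbitrary $\varepsilon>0$ and, given a planar 3CNF $F$, build $G'(F)$ with the parameters $\zeta=\tfrac{\varepsilon}{6-\varepsilon}$, $L=9$, $\delta=\tfrac{\zeta}{9}$, replacing every $\delta$-tube of $G(F)$ by an $L$-$\delta$-chain and every $\zeta$-segment adjacent to a doubly-reachable literal node by the two parallel arcs. Then I would record that this construction is carried out in polynomial time: each $L$-$\delta$-chain uses $\bigO(\log(L/\delta))=\bigO(\log(\varepsilon^{-1}))$ tubes and hence that many separating agents; the number of chains, variable agents, and clause agents is polynomial in the number of variables and clauses of $F$; and every edge weight is a fixed function of $\varepsilon$, hence of bounded magnitude. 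So $G'(F)$, together with all starting positions and budgets, has size polynomial in $|F|$, and it is planar.

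Next I would invoke Lemma~\ref{lem:resource-augmented-budget-with-return}: $F$ is satisfiable if and only if the message can be delivered from $s$ to $t$ in $G'(F)$ by $(2-\varepsilon)$-resource-augmented agents that all return home. I would additionally note — and this is immediate from the construction, since the $L$-$\delta$-chains transport the message already with the original budgets (property (iii) of an $L$-$\delta$-chain) and the variable and clause agents are given exactly the budgets needed for their intended moves — that whenever $F$ is satisfiable the delivery is in fact already possible with the \emph{original} budgets. This observation is the one subtle point, because the definition of a $\gamma$-resource-augmented algorithm stipulates that its ``no'' answer refers to the original, unaugmented instance.

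Now suppose, for contradiction, that $\mathcal{A}$ is a polynomial-time $(2-\varepsilon)$-resource-augmented algorithm for \returning \ourProblem, and run $\mathcal{A}$ on $G'(F)$. If $F$ is satisfiable, there is a feasible schedule with the original budgets, so $\mathcal{A}$ cannot (correctly) answer that there is no feasible schedule; hence it outputs a schedule feasible with the $(2-\varepsilon)$-augmented budgets. If $F$ is unsatisfiable, Lemma~\ref{lem:resource-augmented-budget-with-return} shows that no schedule is feasible even with $(2-\varepsilon)$-augmented budgets, so $\mathcal{A}$ cannot output such a schedule and must answer that there is none. Thus reading off whether $\mathcal{A}$ returns a schedule or reports infeasibility decides \textsc{Planar3SAT} in polynomial time, forcing $\mathrm{P}=\mathrm{NP}$.

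Finally, for the \nonreturning version the identical argument applies with $\gamma=3-\varepsilon$ in place of $2-\varepsilon$: the construction is the same up to the non-returning budget values ($\zeta$ for variable agents, $1+\zeta$ for clause agents), the relevant inequalities are the ones collected in Footnote~\ref{foot:3-eps-bounds}, and the \nonreturning half of Lemma~\ref{lem:resource-augmented-budget-with-return} supplies the equivalence. I do not expect a genuinely hard step here: once the lemma and the polynomial size bound are granted, only the bookkeeping around the definition of a resource-augmented algorithm — in particular that ``no feasible schedule'' refers to the original budgets, which is why the original-budget feasibility of satisfiable instances is needed — requires care.
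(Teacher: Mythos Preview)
Your proposal is correct and follows essentially the same approach as the paper: assemble the polynomial-size bound on $G'(F)$ from the $L$-$\delta$-chain analysis, invoke Lemma~\ref{lem:resource-augmented-budget-with-return} (and its \nonreturning analogue) for the equivalence, and conclude that a hypothetical resource-augmented algorithm would decide \textsc{Planar3SAT}. You are in fact more explicit than the paper about the bookkeeping around the definition of a resource-augmented algorithm---the paper simply writes ``Hence we conclude'' after the size estimate---and your observation that satisfiable instances are feasible already with the \emph{original} budgets (so that the algorithm cannot correctly answer ``no'') is exactly the content of the ``$\Rightarrow$'' direction of the lemma.
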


\section{Conclusions}
\label{sec:discussion}

We gave a polynomial time algorithm for the returning variant of the problem on trees, as well as a best-possible resource-augmented algorithm for general graphs.
On the other hand, we have shown that \ourProblem is \NP-hard, even on planar graphs and even if we allow resource augmentation.
Our bounds on the required resource augmentation are tight and complement the previously known algorithm~\cite{DDalgosensors13} for the non-returning case.

Our results show that \ourProblem becomes hard when transitioning from trees to planar graphs.
It is natural to investigate other causes for hardness.
Chalopin et al.~\cite{DDalgosensors13} gave a polynomial algorithm for the \nonreturning version under the assumptions that (i) the order in which the agents move is fixed and (ii) the message can only be handed over at vertices.
Using a dynamic program, we are able to drop assumption (ii), allowing handovers within edges.
Our result holds for both versions of \ourProblem.

\begin{theorem}
	\ourProblem is solvable in time $\bigO(k(n+m)(n\log n + m))$ if the agents are restricted to a fixed order in which they move.
	\label{thm:budget-with-schedule}
\end{theorem}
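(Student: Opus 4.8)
The plan is to give a dynamic program over the fixed agent order $1, 2, \dots, k$, where the state records how far along the message has been carried by the time the current agent takes over. The key insight is that, because the order is fixed, we only need to track a single "frontier" — the point at which agent $i$ must pick up the message — and then decide how far agent $i$ can push it forward given its budget and the returning/non-returning constraint. The subtlety over the earlier $O(n)$ path argument is that handovers may happen in the interior of edges, so the frontier is a point on $G$, not a vertex, and there are infinitely many candidate positions; the main work is to argue that it suffices to consider only polynomially many "canonical" frontier positions.

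First I would set up the reachability primitive. For agent $i$ starting at $p_i$ with budget $B_i$, and a pickup point $u$ on some edge, I want the set of points $v$ to which $i$ can carry the message: in the non-returning case this is $\{v : d(p_i,u) + d(u,v) \le B_i\}$, and in the returning case $\{v : d(p_i,u)+d(u,v)+d(v,p_i) \le B_i\}$, i.e. $v \in \mathcal{E}(p_i,u,B_i)$ in the notation of Section~\ref{sec:resource-augmentation}. By running Dijkstra from $p_i$ (and, for the returning case, noting symmetry in $u$ and $v$), for a fixed pickup edge and a parametrized pickup point $u$ one can describe, for every other edge $e'=(x,y)$, the furthest point of $e'$ reachable as a piecewise-linear function of the position of $u$ along its edge. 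The budget spent is then $O(k)$ calls, each amounting to a few shortest-path computations; the $O(n\log n + m)$ factor in the running time is exactly the cost of Dijkstra, and the $k(n+m)$ factor counts (agent, frontier-edge) pairs.

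Next, the canonical-frontier argument. I claim the message never needs to be handed over except at a vertex, or at a point on an edge that is "pinned" by some agent — namely a point at distance exactly $B_j$ (non-returning) or where the ellipsoid boundary of some agent $j$ meets the edge (returning) — so the set of relevant handover points on each edge is finite and of size $O(k)$. The argument is an exchange/pushing argument: given any feasible schedule, slide each handover point in the direction the message travels as far as possible without making any agent's subpath infeasible; the process stops only when some agent becomes tight, i.e. the handover sits on an ellipsoid/ball boundary, or it reaches a vertex. This discretizes the state space to $O(n + k(n+m)) = O(k(n+m))$ candidate frontier points. The DP then processes agents in order: for agent $i$ and each candidate frontier point $q$, compute (using the primitive above) the set of candidate frontier points reachable after $i$ carries the message, and mark them; after all $k$ agents, accept iff $t$ (or some point from which a later agent could still be skipped) has been marked. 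Each of the $k$ stages does $O((n+m))$ reachability queries at cost $O(n\log n+m)$ each, giving the stated $O(k(n+m)(n\log n+m))$ bound.

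The main obstacle I expect is making the canonical-frontier claim fully rigorous in the presence of interior handovers: one must be careful that "pushing" a handover forward does not cascade badly — moving $h_i$ forward helps agent $i+1$ but costs agent $i$, so the exchange must be set up as a single monotone sweep rather than a local swap, and one has to check that the stopping configuration genuinely has a tight agent whose ellipsoid boundary pins the point. A secondary technical point is handling the case where an optimal schedule uses a proper subsequence of the agents (some agents in the middle do nothing); this is accommodated by allowing the DP to "pass" the frontier unchanged from agent $i$ to agent $i+1$ at no cost, so skipping is free. Both issues are routine once the reachability primitive is in place, but the sweep argument is where the care is needed.
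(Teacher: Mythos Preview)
Your overall architecture---a DP over the fixed agent order, allowing agents to be skipped, with a per-agent reachability primitive computed via Dijkstra---matches the paper's. The divergence is in how you discretize the state space, and that is where your argument has a real gap.

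You want a fixed, precomputed set of $O(k(n+m))$ ``canonical'' handover points, namely vertices together with the ellipsoid-boundary points of each agent on each edge. But an agent's ellipsoid $\mathcal{E}(p_i,u,B_i)$ depends on the pickup location $u$, which in turn is the previous handover point and may itself lie in the interior of an edge. So the boundary of agent $i$'s reachable region on a given edge is not a finite set of a priori known points; it varies continuously with $u$. Your sweep argument (push each $h_i$ forward until some agent is tight) does not terminate in a fixed finite set: $h_1$ is pinned by agent~1's ellipsoid from $s$ (fine, finitely many candidates), but $h_2$ is pinned by agent~2's ellipsoid from $h_1$, which depends on which $h_1$ was chosen, and so on. The number of reachable configurations can grow with the prefix length, and you give no argument bounding it by $O(n+m)$ per stage---indeed your own count of $O(k(n+m))$ canonical points is inconsistent with the $O((n+m))$ queries per stage you then claim in the running-time analysis.

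The paper sidesteps this entirely with a different and simpler observation: one never needs to enumerate handover points at all. Instead, maintain the \emph{reachable set} after agents $1,\dots,i$, and note that it admits a compact representation of size $n+m$: a boolean per vertex, and for each edge $(u,v)$ with exactly one marked endpoint $u$, a single real number recording the furthest point into the edge from $u$ that the message can reach. The reason a single ``furthest'' point per edge suffices is that any later agent picking up inside that edge either (a) enters via $u$, in which case it could equally well pick up at $u$ itself, or (b) enters via $v$, in which case the closer the message is to $v$ the better. This monotonicity is the key lemma you are missing; once you have it, the DP update for agent $i$ is just: for each of the at most $n+m$ current marks $l$, compute $\mathcal{E}(p_i,l,B_i)$ and extend the marks, which is exactly the stated $O((n+m)(n\log n+m))$ per agent.
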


\begin{proof}
	If there is a feasible schedule, we can compute it in a breadth-first search-like fashion where we proceed agent by agent and update reachable regions of the
	graph on-the-fly:
	Each agent can either not help in the schedule or it can transport the message from a pickup location to a drop-off location. We show that we
	can restrict drop-offs to meaningful places such that for each agent the set of all possible pickup locations is bounded by $n + m$.
	This limitation to only one of the potentially infinitely many handover points inside each edge allows us to use dynamic programming and to proceed by induction: 
	
	Denote the agents in the schedule order by $a_1, \ldots, a_{\ell}$.
	The first agent $a_1$ can pick up the message at $s$ only, hence there is only one possible pick-up location. If $a_1$ wants to drop off the
	message at a vertex, there are at most $n$ choices of where to do so. We mark all the vertices which $a_1$ can reach from $s$ while still
	being able to return home. Now assume $a_1$ wants to drop off the message inside an edge $e = \{u,v\}$. This means that $e$ can be reached by
	$a_1$, hence without loss of generality the vertex $u$ is marked. If $v$ is marked as well, then $a_1$ should \emph{not} drop the message
	inside $e$, since the message has to be picked up later, which could just as well be done at either $u$ or $v$. Otherwise $a_1$ should bring
	the message \emph{as far as possible into the edge} (since if a later agent $a_i$ wants to pick up the message, it can pick it up at $u$ or come in
	via $v$). We mark this point inside the edge and store its distance from $u$. We now restrict ourselves to these at most $n+m$ described drop-off
	locations.

	An agent $a_i, \ i > 1$ can pick up the message at $s$ or at any previous drop-off location. By induction there are at most $n+m$ many
	such locations. Now we first check whether $a_i$ can pick up the message somewhere and deliver it to any not yet marked vertex. If so, we mark
	this vertex (and the number of marked vertices stays at most $n$). Next we check whether $a_i$ can bring the message into an edge $e = \{u,v \}$
	for which (without loss of generality) $u$ is marked and $v$ is not. Check whether the point inside the edge which is furthest from $u$ -- and
	still can be reached by $a_i$ -- has larger distance to $u$ than a previously marked point. If so, delete the old point (if any) and mark the
	new point and store its distance from $u$. The number of marked edges stays at most $m$. 
	
	If at some point we mark the vertex $t$, we are done.
	Since each agent $i$ has at most $n+m$ pick-up locations to consider, we can compute all new marks by computing the ellipsoid $\mathcal{E}(p_i,l,B_i)$ for every old mark $l$, which we can do by running Dijkstra's shortest path algorithm once from $p_i$ and once from each old mark. Hence we require time $\bigO((n+m)\cdot(n\log n + m))$ per agent.
\end{proof}

\begin{corollary}
For a constant number of agents $k$, \ourProblem is solvable in time  $\text{poly}(n,m)$ by brute forcing the order of the agents.
\end{corollary}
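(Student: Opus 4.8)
The plan is to derive the corollary directly from Theorem~\ref{thm:budget-with-schedule} by exhausting the constantly many possible orders. First I would observe that, by definition, a schedule assigns to each agent at most one pick-up location and one drop-off location, so in any feasible schedule the message travels from $s$ to $t$ while being carried by a sequence $a_{1}, a_{2}, \ldots, a_{\ell}$ of \emph{distinct} agents; hence every feasible schedule respects some linear order on the $k$ agents (extend the order of the $\ell$ moving agents arbitrarily to a permutation of all $k$). If one prefers to start from the more permissive notion where an agent could carry the message more than once, the same conclusion follows after a short merging argument: whenever an agent $i$ drops the message at a point $Y$ and later picks it up again at $Z$, its walk between these two events has length at least $d(Y,Z)$, so we may instead let $i$ carry the message directly along a shortest $Y$--$Z$ path (this does not increase $i$'s travelled distance, preserves the returning condition if present, and frees every agent that carried the message between $Y$ and $Z$); iterating removes all repeated pick-ups.

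Next I would enumerate all $k!$ permutations $\pi$ of the agents, and for each $\pi$ run the algorithm of Theorem~\ref{thm:budget-with-schedule} with the agents constrained to move in the order $\pi$. Each such run either returns a feasible schedule consistent with $\pi$ or correctly reports that none exists. By the previous paragraph, \ourProblem admits a feasible schedule if and only if at least one of these $k!$ runs succeeds, since every feasible schedule respects some permutation and, conversely, any schedule returned for some $\pi$ is feasible by the guarantee of Theorem~\ref{thm:budget-with-schedule}. The algorithm therefore answers \true\ precisely when one of the runs succeeds, and it can output the corresponding schedule.

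Finally I would bound the running time: there are $k! = \bigO(1)$ permutations, and each run costs $\bigO(k(n+m)(n\log n + m))$ by Theorem~\ref{thm:budget-with-schedule}, for a total of $\bigO\bigl(k!\cdot k\,(n+m)(n\log n + m)\bigr) = \text{poly}(n,m)$ when $k$ is a constant. I do not expect a real obstacle here: the statement is essentially ``brute force over orders plus Theorem~\ref{thm:budget-with-schedule}'', and the only point worth spelling out is the reduction of an arbitrary feasible schedule to one respecting a fixed order, which is immediate under the paper's definition of a schedule and otherwise handled by the merging argument above.
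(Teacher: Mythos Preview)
Your proposal is correct and matches the paper's approach exactly: the paper gives no explicit proof of this corollary, as the statement itself already names the method (``by brute forcing the order of the agents''), and your write-up simply spells out that enumeration of the $k!$ orders combined with Theorem~\ref{thm:budget-with-schedule} yields the claim. Your extra merging argument is unnecessary under the paper's definition of a schedule (each agent has at most one pick-up and one drop-off), but it does no harm.
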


An interesting open problem is to understand collaborative delivery of multiple messages at once.
For example, the complexity of the problem on paths remains open.
In this setting, it may be resonable to constrain the number of agents, the number of messages, or the ability of transporting multiple messages at once, in order to allow for efficient algorithms.
Also, in general graphs, the problem may not become easy if the order in which agents move is fixed.

\newpage
\bibliographystyle{plain}
\bibliography{budgetbib}

\newpage
\appendix

\section*{Appendix}

\section{NP-Hardness}
\label{app:np-hardness}

\begin{figure}[h!]
	\centering
	\includegraphics[width=\linewidth]{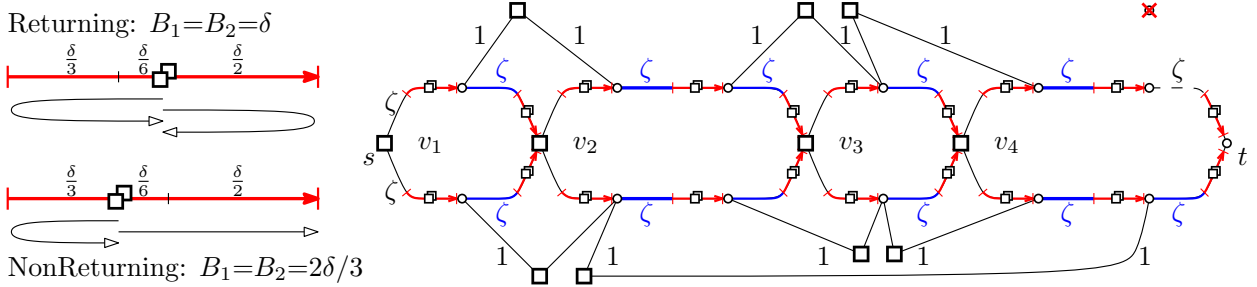}
	\caption{(left) Two examples of $\delta$-tubes for both versions of \ourProblem.\newline
		(right) Agent placement and edge weights on $G(F)$; agents are depicted by squares.}
	\label{fig:placement_weights2}
\end{figure}

\begin{proposition}
	\label{prop:SAT-schedule}
	If a planar 3CNF $F$ is satisfiable, then in the corresponding delivery graph $G(F)$,
	the agents can collectively deliver the message from $s$ to $t$ and return to their respective starting positions.
\end{proposition}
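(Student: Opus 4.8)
The plan is to turn a satisfying assignment $\phi\colon V\to\{\true,\false\}$ of $F$ into a feasible returning schedule for $G(F)$: first fix the route the message takes, then assign the crossing of each piece of that route to an agent that can afford the corresponding round trip back home. For each variable $v_i$ I would route the message through $p_{i,\true}$ if $\phi(v_i)=\true$ and through $p_{i,\false}$ otherwise; concatenating these paths for $i=1,\dots,x$ gives an $s$-$t$ path $P$ in $G(F)$ (recall $s=v_1$, $t=v_{x+1}$) that visits no clause node. The point of this choice is the following observation, which I would record first: by the placement rule for literal nodes --- $l_{i,c}$ sits on $p_{i,\false}$ when $v_i\in c$ and on $p_{i,\true}$ when $\overline{v_i}\in c$ --- a node $l_{i,c}$ lies on $P$ if and only if the literal of $c$ on $v_i$ is false under $\phi$.

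Next I would describe how each agent moves. Each path $p_{i,\cdot}$ begins at $v_i$ with a $\zeta$-segment, and the variable agent sitting on $v_i$ picks up the message at $v_i$, carries it to the far end of that segment, drops it, and walks back, travelling $2\zeta$ in total --- exactly its budget. Every $\delta$-tube lying on $P$ is handled by its two resident separating agents, which by the defining properties of a $\delta$-tube transport the message from the tube's left end to its right end and return to their starting positions; every separating agent in a $\delta$-tube off $P$ stays put.

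The crux is the clause agents. After the two steps above, the only still-uncovered pieces of $P$ are the $\zeta$-segments incident to a literal node $l_{j,c}$ with $\ell(v_j)$ falsified by $\phi$, and each such segment must be bridged by a clause agent of $c$. A clause agent stationed at a node incident to $l_{j,c}$ can do this: it walks the length-$1$ edge to $l_{j,c}$, picks up the message there, carries it across the incident $\zeta$-segment of $P$, returns to $l_{j,c}$, and goes back to its node, travelling $1+\zeta+\zeta+1=2(1+\zeta)$ --- again exactly its budget. So the remaining task is to show that for every clause $c$ its clause node(s) can be matched, injectively and respecting incidence, to the falsified-literal nodes of $c$ (all of which lie on $P$). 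This is vacuous for a $1$-literal clause, whose only literal is true since $c$ is satisfied. For a $2$-literal clause $c=\ell(v_i)\vee\ell(v_j)$ at most one literal is falsified, and the unique clause node $c_{i,j}$ is incident to both $l_{i,c}$ and $l_{j,c}$, so it covers the falsified one. For a $3$-literal clause $c=\ell(v_i)\vee\ell(v_j)\vee\ell(v_k)$ at most two literals are falsified because $c$ is satisfied; since $c_{i,j}$ is incident to $l_{i,c},l_{j,c}$ and $c_{j,k}$ is incident to $l_{j,c},l_{k,c}$, a short check of the three possible falsified pairs yields a matching (for $\{i,k\}$ put $c_{i,j}\mapsto l_{i,c}$ and $c_{j,k}\mapsto l_{k,c}$; the pairs $\{i,j\}$ and $\{j,k\}$ are handled the same way, using that $l_{j,c}$ is incident to both clause nodes). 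Clause agents not assigned this way stay put.

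It then remains to assemble the schedule: reading $P$ from $s$ to $t$, the drop-off point chosen for each piece equals the pick-up point chosen for the next, so the hand-overs chain up and the message reaches $t$; every agent that moves travels at most its budget and ends at its starting position, and every other agent stays put, so all agents return home, giving a feasible schedule for \returning \ourProblem. I expect the matching step for clause agents to be the only real obstacle: a clause agent's budget $2(1+\zeta)$ is exactly the cost of one such excursion, so each clause node can help at most one falsified-literal node, and the construction works precisely because a satisfied clause falsifies at most two of its literals. This is the single place where satisfiability of $F$ is used; everything else reduces to the direct budget computations above.
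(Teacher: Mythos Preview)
Your proposal is correct and follows essentially the same approach as the paper's proof: route the message along the paths dictated by the satisfying assignment, let each variable agent cover the first $\zeta$-segment of its chosen path, let separating agents handle the $\delta$-tubes, and do the same case analysis on $|c|\in\{1,2,3\}$ to assign clause agents to the remaining $\zeta$-segments at falsified literal nodes. Your explicit budget checks ($2\zeta$ and $2(1+\zeta)$) and the phrasing of the clause-agent assignment as an incidence matching are a bit more detailed than the paper's write-up, but the argument is the same.
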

\begin{proof}
	Assume that there is a satisfiable assignment for $F$. Then the agents' actions are straightforward: 
	Each \emph{variable agent} placed on $v_i$ moves according to the variable assignment to $v_i$ by a $\zeta$-distance into either the \emph{true}-path $p_{i,\true}$ or the \emph{false}-path $p_{i,\false}$.
	For the message to be delivered to the next variable agent, it needs to be handed across $\delta$-tubes and $\zeta$-segments. 
	The former can always be done by the respective \emph{separating agents} residing inside the $\delta$-tube.
	It remains to be shown that the latter can be done by \emph{clause agents}.
	To this end, consider a clause $c$ which consists of $|c|$ literals. 

	If $|c|=1$ respectively $c = \ell(v_j)$ for some $j$, then there is no clause node in $G(F)$ at all (see the top right clause in Figure~\ref{fig:placement_weights2}). 
	No agent can reach the $\zeta$-segment adjacent to $l_{j,c}$, but this does not cause a problem, since by our assumption the literal $\ell(v_j)$ is satisfied and thus 
	the variable agent at $v_j$ chose to deliver the message via the opposite path $p_{j,\mathit{\ell(v_j)}}$.
	
	If $|c|=2$, then there is one clause agent on a single clause node $c_{i,j}$ which is connected to the internal path nodes $l_{i,c}$ and $l_{j,c}$ (see the top left clause in Figure~\ref{fig:placement_weights2}).
	Both have adjacent $\zeta$-segments which correspond to the literals $\overline{\ell(v_i)}, \overline{\ell(v_j)}$. 
	By assumption, at least one literal -- without loss of generality $\ell(v_j)$ -- is satisfied, and since the variable agent choosing the assignment for $v_j$
	thus takes the ``opposite'' path $p_{j,\mathit{\ell(v_j)}}$, the $\zeta$-segment corresponding to  $\overline{\ell(v_j)}$ does not need to be crossed while delivering the message. 
	If the other literal is not satisfied, then the clause agent is needed at the corresponding $\zeta$-segment, otherwise it can stay at its place of origin.
	
	If $|c|=3$, we have three literals/$\zeta$-segments and we have two clause nodes with one agent each (see the top center clause in Figure~\ref{fig:placement_weights2}). 
	One is connected to the first and the second $\zeta$-segment, the other to the second and third. 
	Collectively the two agents can reach every possible pair of segments out of the three $\zeta$-segments. 
	At least one literal $\ell(v_j)$ is satisfied. To each of the remaining $\zeta$-segments we can therefore send one agent.  
	Moving to the path needs 1 unit of energy (and so does returning to the clause node). Hence the agent can cover a remaining distance of
	$\zeta$, which is sufficient to transport the message to the next $\delta$-tube.
\end{proof}
\newpage

\begin{proposition}
	\label{prop:np-no-clause}
	It is not possible to deliver a message from $s$ to $t$ via any clause node of the delivery graph $G(F)$.
\end{proposition}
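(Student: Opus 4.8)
The plan is to run a budget argument localized at a single clause node. Suppose, for contradiction, that in some feasible schedule the message's trajectory passes through a clause node; fix such a node $\nu$ (a renamed node $c_{i,j}$ of a two-literal clause, or one of the split nodes $c_{i,j},c_{j,k}$ of a three-literal clause). Recall that $\nu$ has degree $2$: its incident edges $e_1=\{\nu,u\}$ and $e_2=\{\nu,w\}$ have weight $1$ and lead to literal nodes $u,w$. Since $s,t$ are variable nodes we have $\nu\notin\{s,t\}$, so the message enters $\nu$ and later leaves it; hence the message is carried over the whole of $e_1$ or $e_2$ on the way in, and over the whole of $e_1$ or $e_2$ on the way out (possibly by a relay of several agents).

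First I would pin down which agents can ever stand on a point of $e_1\cup e_2$. A variable agent has budget $2\zeta<1$ and a separating agent has budget at most $\delta=\tfrac{4\zeta}{3}<1$, so neither can move along a weight-$1$ edge at all; moreover a separating agent never leaves its $\delta$-tube by more than $\tfrac{\delta}{3}<\zeta$, so it does not even reach $u$ or $w$. A clause agent stationed at a clause node other than $\nu$ is, by construction, separated from $u,w,\nu$ either by a $\delta$-tube or by a detour through $\nu$, and one checks that on budget $2(1+\zeta)=2.25$ it can neither reach a point of $e_1\cup e_2$ and return to its start — with a single exception: if $\nu=c_{i,j}$ comes from a three-literal clause, the sibling clause agent $a'$ at $c_{j,k}$ can reach the shared literal node $w=l_{j,c}$ at cost $1$. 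Every excursion of $a'$ to $w$ and back, however, costs $2$, leaving only $2\zeta$, so $a'$ can carry the message over a total length of at most $\zeta$ inside $e_2$, all of it within distance $\zeta$ of $w$. Consequently, \emph{every point of $e_1$, and every point of $e_2$ at distance less than $1-\zeta$ from $\nu$, can be occupied only by the clause agent $a$ stationed at $\nu$}, and the message cannot sit at such a point unless $a$ carried it there itself.

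Next I would trace the walk of $a$, which starts and ends at $\nu$. To realize the passage of the message through $\nu$, agent $a$ must: (i) travel from $\nu$ to the spot where it takes charge of the message for the leg into $\nu$ — by the previous paragraph a literal node or a point of $e_1\cup e_2$ at distance $\ge 1-\zeta$ from $\nu$ — costing $\ge 1-\zeta$; (ii) carry the message into $\nu$ along $e_1$ or $e_2$, costing $\ge 1-\zeta$; (iii) at a later moment carry the message back out of $\nu$ along $e_1$ or $e_2$ to a drop-off at distance $\ge 1-\zeta$ from $\nu$, costing $\ge 1-\zeta$; and (iv) return home to $\nu$ from that drop-off, costing $\ge 1-\zeta$. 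These are four disjoint, time-ordered stretches of $a$'s walk, so $a$ travels at least $4(1-\zeta)=3.5$, which exceeds its budget $2(1+\zeta)=2.25$ — a contradiction. (If $\nu$ is not a shared split node, $a'$ is absent, every point of $e_1\cup e_2$ is reachable only by $a$, and the cruder bound $4>2.25$ already suffices.)

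The step I expect to be the main obstacle is justifying (i) and (iv): ruling out that $a$ "shortcuts" by picking up or dropping the message at a point of $e_1\cup e_2$ arbitrarily close to $\nu$. This is exactly what the reachability analysis of the second paragraph supplies — no agent other than $a$ can place the message onto $e_1$ or onto the part of $e_2$ near $\nu$, and before $a$ first handles the message it has only been moved by other agents, so the pickup for the incoming leg, and symmetrically the drop-off for the outgoing leg, are forced to be at distance $\ge 1-\zeta$ from $\nu$. With this localization the arithmetic $4(1-\zeta)>2(1+\zeta)$ (using $\zeta=\tfrac{1}{8}$) finishes the argument, and the identical reasoning applies to every clause node of $G(F)$.
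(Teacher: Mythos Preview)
Your approach is essentially the paper's: localize the argument at a clause node $\nu$, observe that no agent except the resident clause agent $a$ can carry the message over the inner part of the two incident weight-$1$ edges, and derive the contradiction $4(1-\zeta)>2(1+\zeta)$. The paper does exactly this in two sentences, asserting that ``no other agent can move further than $\zeta$ into each of the two edges'' and then computing $4(1-\zeta)=\tfrac{7}{2}>2(1+\zeta)$.

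One small inaccuracy in your write-up: separating agents in a $\delta$-tube adjacent to a literal node $u$ \emph{can} reach $u$ (it is an endpoint of their tube) and move up to $\tfrac{\delta}{3}$ into $e_1$ from the $u$-side. So the claim ``every point of $e_1$ can be occupied only by $a$'' is too strong; the correct statement is the symmetric one you give for $e_2$, namely that points of $e_1$ at distance less than $1-\zeta$ from $\nu$ are $a$-only (since $\tfrac{\delta}{3}<\zeta$). This does not affect your arithmetic, because in steps (i)--(iv) you only use the threshold $1-\zeta$ on both edges, which remains valid.
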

\begin{proof}
	For the sake of contradiction assume that the message is transported via a clause node $c_{i,j}$ which connects to the internal path nodes $l_{i,c}$ and $l_{j,c}$.
	Except for the clause agent stationed at $c_{i,j}$, no other agent can move further than $\zeta = \tfrac{1}{8}$ into each of the two edges $\left\{ l_{i,c},\, c_{i,j} \right\}$ and $\left\{ l_{j,c},\, c_{i,j} \right\}$.
	Hence the clause agent stationed at $c_{i,j}$ needs to cover in each edge a distance of $2(1-\zeta)$ (to go back and forth), 
	hence for both edges it needs an energy of at least $2\cdot 2(1-\zeta) = 4\cdot \tfrac{7}{8} = \tfrac{7}{2}$. 
	However, the agent has a budget of only $2(1+\zeta) = 2\cdot \tfrac{9}{8} < \tfrac{7}{2}$, yielding a contradiction to the message being transported over the clause 
	node.\footnote{Non-returning agents need to cover only one of the edges twice (to go back and forth), hence they need an energy of at least $3(1-\zeta) = \frac{21}{8}$ versus a budget of $1+\zeta = \frac{9}{8}$, yielding a contradiction as well.}
\end{proof}

\begin{proposition}
	\label{prop:np-gaps}
	To deliver the message over a $\zeta$-segment adjacent to a variable node $v_i$, we need the \emph{variable agent} with starting position $v_i$.
	To deliver the message over a $\zeta$-segment adjacent to a literal node $l_{j,c}$, we need a \emph{clause agent} with starting position $c_{i,j}$ or $c_{j,k}$.
\end{proposition}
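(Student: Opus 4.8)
The plan is to derive both statements together by combining a "separating agents are useless across a gap" argument with a budget-based reachability argument. First I would recall the fact established in the discussion of edge weights: the two separating agents residing in a $\delta$-tube can each leave their tube by at most $\tfrac{\delta}{3} = \tfrac{4\zeta}{9} < \tfrac{\zeta}{2}$. Hence across any $\zeta$-segment flanked on both sides by $\delta$-tubes the separating agents reach only the outer $\tfrac{4\zeta}{9}$ from either end, leaving an open sub-interval around the segment's midpoint that no separating agent can visit. Since the message moves only while held by an agent, delivering it across the $\zeta$-segment forces some agent to physically traverse this midpoint interval while carrying the message; as the only non-separating agents are variable agents and clause agents, at least one variable or clause agent must be able to reach the interior of the $\zeta$-segment. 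If instead the $\zeta$-segment is bounded on one side by the variable node $v_i$ or by a literal node $l_{j,c}$, the same reasoning applies, with the agents that can reach in from that side restricted to the ones sitting at or connected to that node.

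Second, I would carry out the range computations that pin down exactly which variable or clause agents can reach the interior of a given $\zeta$-segment. A variable agent at $v_i$ has budget $2\zeta$ and must return to $v_i$, so it moves a total distance of at most $2\zeta$ and thus reaches at most distance $\zeta$ from $v_i$; since each incident $\zeta$-segment on $p_{i,\true}$ and $p_{i,\false}$ has length exactly $\zeta$ and is immediately followed by a $\delta$-tube, the agent can reach the far endpoint of an adjacent $\zeta$-segment but cannot enter the subsequent $\delta$-tube, and in particular cannot reach the interior of any $\zeta$-segment not adjacent to $v_i$. A clause agent at $c_{i,j}$ has budget $2(1+\zeta)$; to reach a literal node it must first traverse an incident edge of length $1$, after which (accounting for the return) it has exactly $2\zeta$ of movement left, so from a literal node it again reaches only to the far endpoint of the adjacent $\zeta$-segment. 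Moreover, by Proposition~\ref{prop:np-no-clause} it cannot usefully visit both of its literal nodes, and it certainly cannot cross a $\delta$-tube; hence the $\zeta$-segment adjacent to a literal node $l_{j,c}$ can be reached in its interior only by the clause agent(s) stationed at the clause node(s) of $c$ adjacent to $l_{j,c}$, namely $c_{i,j}$ and/or $c_{j,k}$. Combining this with the first step gives the proposition: a $\zeta$-segment adjacent to $v_i$ can be crossed only with the help of the variable agent at $v_i$, and a $\zeta$-segment adjacent to $l_{j,c}$ only with the help of a clause agent starting at $c_{i,j}$ or $c_{j,k}$.

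The main obstacle I anticipate is the bookkeeping needed to rule out \emph{hybrid} helper teams — for instance a clause agent that reaches a literal node, is relayed a little further by a separating agent from the adjoining $\delta$-tube, and thereby appears to reach a second $\zeta$-segment. The point to make is that a separating agent extends a clause (or variable) agent's effective reach by strictly less than $\tfrac{\zeta}{2}$ on each side, whereas a clause agent already spends essentially its entire budget $2(1+\zeta)$ just to pick up the message at a literal node, carry it across one length-$\zeta$ segment, and return home; there is no slack left either to reach beyond the immediately adjacent $\zeta$-segment or to reach a $\zeta$-segment guarded by a $\delta$-tube. A secondary point to settle is that no agent can approach $v_i$ (resp.\ $l_{j,c}$) "from behind", since along each path every two consecutive $\zeta$-segments are separated by a $\delta$-tube that no variable or clause agent can cross, so the only non-separating agents with access to the interior of a $\zeta$-segment are exactly the ones claimed.
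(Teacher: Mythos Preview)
Your proposal is correct and follows essentially the same approach as the paper: first show that separating agents can exit their tube by at most $\tfrac{\delta}{3}<\tfrac{\zeta}{2}$ and hence cannot bridge a $\zeta$-segment, then argue that any variable or clause agent, once on a path, has range at most $\zeta<\delta$ and so cannot cross a $\delta$-tube to reach a non-adjacent $\zeta$-segment. The paper compresses your second step into the single observation that ``variable agents and clause agents can move on a $\true$- or $\false$-path inside an interval of size at most $\zeta < \delta$''; your invocation of Proposition~\ref{prop:np-no-clause} and the discussion of ``hybrid helper teams'' are unnecessary here, since the $\zeta<\delta$ bound already confines each non-separating agent to a single $\zeta$-segment regardless of any relaying by separating agents.
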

\begin{proof}
	Recall that $\delta = \tfrac{4\zeta}{3}$.
	Separating agents inside $\delta$-tubes can neither single-handedly nor collectively (starting from both sides) transport the message over a $\zeta$-segment, 
	since they can move outside of their residential tube by at most $\tfrac{\delta}{3} < \tfrac{\zeta}{2}$.
	Furthermore variable agents and clause agents can move on a $\true$- or $\false$-path inside an interval of size at most $\zeta < \delta$, hence they can't cross a $\delta$-tube.
	Thus to transport the message over a $\zeta$-segment adjacent to a variable node $v_j$, we need the variable agent placed on $v_j$.
	On the other hand, transporting the message over a $\zeta$-segment adjacent to the internal path node $l_{j,c}$ needs a clause agent of clause $c$. 
	If $c$ has two clause nodes $c_{i,j}, c_{j,k}$, either of the two clause agents will do.
\end{proof}

\lemnphardness*
\begin{proof}
	``$\Rightarrow$'' This direction has been shown in Proposition~\ref{prop:SAT-schedule}.

	``$\Leftarrow$''  Assume that the message can be delivered from $s$ to $t$. 
	From Proposition~\ref{prop:np-no-clause} it follows that the message has to be transported through the $\true$- and $\false$-paths. 
	Without loss of generality, the message must move monotonously through the paths $p_{i,\true}$ or $p_{i,\false}$. 
	By Proposition~\ref{prop:np-gaps} we know that for each $\zeta$-segment that the message is delivered over, we need either the corresponding variable agent or the corresponding clause agent.
	It remains to show that we have enough clause agents for the task:

	Each clause with $|c|$ literals ``owns'' only $|c|-1$ clause agents and thus must have at least one satisfied literal 
	(otherwise the $|c|-1$ clause agents would not be sufficient to help in all corresponding $\zeta$-segments). 
	Hence we can read a satisfiable variable assignment for the \textsc{Planar3SAT} instance directly from the choice of the variable agents 
	(which each pick the adjacent $\true$- or the adjacent $\false$-path).
\end{proof}

\noindent It is easy to see that the same arguments work for \nonreturning \ourProblem as well, hence we immediately get the same statement for
the NonReturning version.
\begin{corollary}[Non-Returning BudgetedDelivery]
	A planar 3CNF $F$ is satisfiable if and only if it is possible to deliver a message in the corresponding delivery graph.	
	\label{cor:budget-without-return}
\end{corollary}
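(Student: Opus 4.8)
The plan is to obtain the corollary by replaying the proof of Lemma~\ref{lem:budget-with-return}, substituting only the \nonreturning agent budgets from the footnotes of Section~\ref{sec:planar-hardness} --- namely $\zeta$ for each variable agent and $1+\zeta$ for each clause agent --- and keeping everything else fixed ($\delta$-tubes, separating agents, the graph $G(F)$, the weights $\zeta=\tfrac18$ and $\delta=\tfrac{4\zeta}{3}$), since $\delta$-tubes were defined for both versions. So I would first restate the three building blocks (Propositions~\ref{prop:SAT-schedule}, \ref{prop:np-no-clause}, \ref{prop:np-gaps}) in their \nonreturning form and then combine them exactly as in the proof of Lemma~\ref{lem:budget-with-return}.

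For ``$\Rightarrow$'' I would adapt Proposition~\ref{prop:SAT-schedule}: given a satisfying assignment, a variable agent at $v_i$ now only carries the message a distance $\zeta$ into its chosen path and stays there, which its budget $\zeta$ exactly affords; a clause agent pays the length-$1$ edge to reach the literal node it must serve and is left with range $\zeta$, enough to carry the message across the adjacent $\zeta$-segment; separating agents bridge the $\delta$-tubes as before; and the counting is untouched --- a clause with $|c|$ literals still owns $|c|-1$ clause agents and, one literal being satisfied, only $|c|-1$ of its $\zeta$-segments need help. For ``$\Leftarrow$'' I would prove the \nonreturning analogue of Proposition~\ref{prop:np-no-clause} via its footnote: routing the message through a clause node $c_{i,j}$ forces its agent to make three one-way trips of length at least $1-\zeta$ each (fetch from $l_{i,c}$, return to $c_{i,j}$, deliver towards $l_{j,c}$; no other agent can push the message more than $\zeta$ into a length-$1$ edge), i.e.\ energy $\ge 3(1-\zeta)=\tfrac{21}{8} > \tfrac98 = 1+\zeta$, a contradiction. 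Proposition~\ref{prop:np-gaps} needs no change: separating agents still reach only $\delta/3<\zeta/2$ outside their tube, so cannot collectively span a $\zeta$-segment, while a variable agent (range $\zeta<\delta$ along a path) and a clause agent (range $\zeta$ along a path after the length-$1$ edge) cannot span a $\delta$-tube; hence a $\zeta$-segment at $v_j$ consumes the variable agent of $v_j$ and one at $l_{j,c}$ consumes a clause agent of $c$.

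With these two facts the concluding argument of Lemma~\ref{lem:budget-with-return} carries over word for word: the message travels monotonically through the true/false paths, each traversed $\zeta$-segment is paid for by its dedicated variable or clause agent, and since each clause owns only $|c|-1$ clause agents at least one of its literals must be satisfied, giving a satisfying assignment. The only thing to verify --- and the mild obstacle --- is that every use of the factor-$2$ ``there and back'' slack in the returning proof survives under the tighter non-returning budgets; but this is precisely what the footnoted inequalities $3(1-\zeta)>1+\zeta$ and $\delta/3<\zeta/2$ guarantee for $\zeta=\tfrac18$, so no new idea is required, and planarity and polynomial size of $G(F)$ were already established for the returning construction.
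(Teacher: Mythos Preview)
Your proposal is correct and follows exactly the approach the paper itself takes: the paper's ``proof'' of this corollary is the single sentence that the same arguments as in Lemma~\ref{lem:budget-with-return} work for the non-returning version, and you have spelled out precisely those arguments with the footnoted non-returning budgets ($\zeta$ for variable agents, $1+\zeta$ for clause agents) and the footnoted inequality $3(1-\zeta)>1+\zeta$ for Proposition~\ref{prop:np-no-clause}. Your write-up is in fact more detailed than the paper's own justification; nothing further is needed.
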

\newpage

\section{Hardness of Resource Augmentation}
\label{app:resource-augmentation-hardness}

\begin{figure}[h!]
	\centering
	\includegraphics[width=\linewidth]{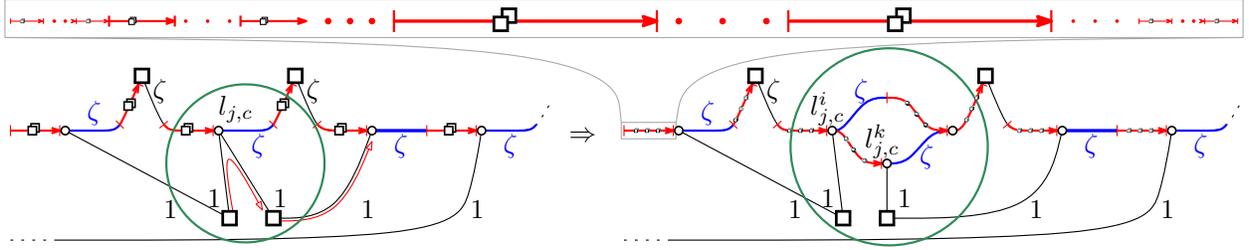}
	\caption{(top-to-bottom) We replace $\delta$-tubes in $G(F)$ by $L$-$\delta$-chains in $G'(F)$.\newline 
	(left-to-right) We replace each $\zeta$-segment connected to two clause agents by two parallel arcs.}
	\label{fig:replacements2}
\end{figure}

\begin{proposition}
	\label{prop:res-schedule}
	If a planar 3CNF $F$ is satisfiable, then in the corresponding delivery graph $G'(F)$,
	the agents can collectively deliver the message from $s$ to $t$ and return to their respective starting positions.
\end{proposition}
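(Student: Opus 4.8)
The plan is to turn a satisfying assignment of $F$ into an explicit schedule that delivers the message in $G'(F)$ using the agents' given (non-augmented) budgets and lets every agent return home; the forward direction of Lemma~\ref{lem:resource-augmented-budget-with-return} then follows at once, since a schedule feasible with the original budgets is a fortiori feasible for $(2-\varepsilon)$-resource-augmented agents. Concretely, I would take the schedule built in the proof of Proposition~\ref{prop:SAT-schedule} and adapt it to the two structural changes of $G'(F)$ (see Figure~\ref{fig:replacements2}): every $\delta$-tube has become an $L$-$\delta$-chain, and every $\zeta$-segment incident to a literal node $l_{j,c}$ that is reachable from \emph{both} clause nodes $c_{i,j},c_{j,k}$ of a three-literal clause has become a pair of parallel arcs. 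As before, each \emph{variable agent} at $v_i$ first carries the message a $\zeta$-distance into $p_{i,\true}$ or into $p_{i,\false}$ according to the assignment, which its budget $2\zeta$ pays for ($\zeta$ forward and $\zeta$ back).

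It then remains to push the message across every $L$-$\delta$-chain and every $\zeta$-segment on the chosen path. Every $L$-$\delta$-chain is traversed left-to-right by its own \emph{separating agents}, already with their original budgets, by property~(iii) of the definition of an $L$-$\delta$-chain, and these agents return to rest because the chain is assembled from returning $\delta$-tubes. A $\zeta$-segment incident to a variable node has already been bridged by the variable agent above. For a $\zeta$-segment incident to a literal node reachable from a single clause node, the corresponding \emph{clause agent} walks the length-$1$ edge to that node, carries the message across the $\zeta$-segment, walks back across it, and returns along the length-$1$ edge, at a cost of $1+\zeta+\zeta+1 = 2(1+\zeta)$, exactly its budget; this is the move already used in Proposition~\ref{prop:SAT-schedule}. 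For a $\zeta$-segment replaced by two arcs at a doubly reachable $l_{j,c}$, I route the message over whichever arc is reachable from the clause agent that is not otherwise occupied: by construction the inner arc opens with a length-$\zeta$ segment whose entrance is joined to $c_{i,j}$ by a length-$1$ edge, the outer arc closes with a length-$\zeta$ segment whose entrance is joined to $c_{j,k}$ by a length-$1$ edge, and the remainder of each arc is the arc's own $L$-$\delta$-chain handled by its separating agents; so a single clause agent, with the same budget accounting as above, together with the arc's chain suffice to carry the message over the arc, and the clause agent still returns home.

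Finally I would check that no three-literal clause $c=\ell(v_i)\vee\ell(v_j)\vee\ell(v_k)$ is required to staff more than the $|c|-1=2$ clause nodes it owns, which is where satisfiability enters. Since the assignment satisfies $c$, some literal is true; by construction that variable's literal node in $c$ lies on the path the corresponding variable agent did \emph{not} choose, so its $\zeta$-segment (or arc pair) is never crossed. If $\ell(v_j)$ is the true literal, the arcs at $l_{j,c}$ are not used, and the agent at $c_{i,j}$ handles $l_{i,c}$ and the agent at $c_{j,k}$ handles $l_{k,c}$, each if its literal is falsified. If instead $\ell(v_i)$ is true, then the agent at $c_{i,j}$ is free of its duty at $l_{i,c}$, so it can cross the inner arc at $l_{j,c}$ when $\ell(v_j)$ is falsified, while the agent at $c_{j,k}$ covers $l_{k,c}$ when $\ell(v_k)$ is falsified; the case that $\ell(v_k)$ is true is symmetric via the outer arc. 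One- and two-literal clauses are handled verbatim as in Proposition~\ref{prop:SAT-schedule}. I expect the main obstacle to be exactly this last matching step: one must be sure that, for every pattern of falsified literals in a three-literal clause, the single clause agent that has to serve the doubly reachable node is one that is permitted to reach one of its two arcs --- which is precisely the reason the inner and outer arcs were attached to different clause nodes.
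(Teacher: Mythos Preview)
Your proposal is correct and follows the same approach as the paper: lift the schedule from Proposition~\ref{prop:SAT-schedule} to $G'(F)$, letting variable and clause agents act exactly as before while separating agents carry the message through each $L$-$\delta$-chain. The paper's own proof is a two-line appeal to Proposition~\ref{prop:SAT-schedule}; you have additionally spelled out the arc-routing case analysis for three-literal clauses, which the paper leaves implicit but which is exactly as you describe.
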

\begin{proof}
	In Proposition~\ref{prop:SAT-schedule} we have seen how the message can be transported in the original delivery graph $G(F)$. 
	In the modified delivery graph $G'(F)$, \emph{variable agents} and \emph{clause agents} do exactly the same as their counterparts in $G(F)$, and \emph{separating agents} help wherever needed.
\end{proof}

\begin{proposition}
	\label{prop:res-no-clause}
	It is not possible for $(2-\varepsilon)$-resource-augmented agents to deliver the message from $s$ to $t$ via any clause node of the delivery graph $G'(F)$.
\end{proposition}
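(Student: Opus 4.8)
The plan is to adapt the proof of Proposition~\ref{prop:np-no-clause} to the resource-augmented graph $G'(F)$, accounting for the two structural changes made in Section~\ref{sec:resource-augmentation-hardness}: $\delta$-tubes become $L$-$\delta$-chains, and every $\zeta$-segment adjacent to a doubly reachable literal node is split into two parallel arcs. First I would assume, for contradiction, that some feasible $(2-\varepsilon)$-resource-augmented schedule routes the message through a clause node $c_{i,j}$. By construction $c_{i,j}$ is incident to exactly two edges of length $1$, leading to two path-side nodes $\lambda_1,\lambda_2$ (these are $l_{i,c},l_{j,c}$ in the case without arc replacement, or nodes $l_{i,c}$, $l_{j,c}^i$, $l_{j,c}^k$ on the inner/outer arcs otherwise). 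The key structural observation here is that the arc replacement was designed precisely so that no other clause node is incident to $\lambda_1$ or $\lambda_2$: in $G(F)$ a literal node $l_{j,c}$ of a three-literal clause was shared by $c_{i,j}$ and $c_{j,k}$, but in $G'(F)$ it is split into $l_{j,c}^i$ (adjacent only to $c_{i,j}$) and $l_{j,c}^k$ (adjacent only to $c_{j,k}$).

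Next I would prove the locality claim: on each of the edges $\{c_{i,j},\lambda_1\}$ and $\{c_{i,j},\lambda_2\}$, every point at distance less than $1-\tfrac{\zeta}{3}$ from $c_{i,j}$ can be visited only by the clause agent stationed at $c_{i,j}$. This uses the defining properties of $L$-$\delta$-chains: a separating agent can leave its chain by at most $3\delta=\tfrac{\zeta}{3}$, so starting from the $\lambda_i$-side it can enter such an edge only within distance $\tfrac{\zeta}{3}$ of $\lambda_i$; and any variable or clause agent coming from elsewhere would have to cross a whole $L$-$\delta$-chain to reach these edges, which is impossible because $L=9$ exceeds twice the $(2-\varepsilon)$-augmented budget of every agent. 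For the same reason, the clause agent at $c_{i,j}$ itself has no $\lambda_1$--$\lambda_2$ route other than the one through $c_{i,j}$.

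With this in hand, the energy accounting is short. Since the message passes through $c_{i,j}$, the clause agent at $c_{i,j}$ must pick it up at a point $P_1$ on $\{c_{i,j},\lambda_1\}$ with $d(c_{i,j},P_1)\ge 1-\tfrac{\zeta}{3}$ (the farthest in another agent can have relayed it), carry it — necessarily through $c_{i,j}$ — to a point $P_2$ on $\{c_{i,j},\lambda_2\}$ with $d(c_{i,j},P_2)\ge 1-\tfrac{\zeta}{3}$ (where a separating agent of the next chain can take over), and finally return home to $c_{i,j}$. Hence its closed walk has length at least $2\,d(c_{i,j},P_1)+2\,d(c_{i,j},P_2)\ge 4\bigl(1-\tfrac{\zeta}{3}\bigr)$, whereas its $(2-\varepsilon)$-augmented budget is, by the inequality chain of the ``Budgets and Edge Weights'' paragraph, $2(2-\varepsilon)(1+\zeta)<2(2-\zeta)<4\bigl(1-\tfrac{\zeta}{3}\bigr)$ — a contradiction. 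I would close by noting that the \nonreturning case is identical except that the clause agent need not return to $c_{i,j}$, so it needs energy at least $3\bigl(1-\tfrac{\zeta}{3}\bigr)$, still exceeding its $(3-\varepsilon)$-augmented budget $(3-\varepsilon)(1+\zeta)<3-\tfrac{\varepsilon}{2}<3\bigl(1-\tfrac{\zeta}{3}\bigr)$ from Footnote~\ref{foot:3-eps-bounds}.

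The main obstacle is the locality claim of the second paragraph: one must be sure that the arc replacement genuinely severs the two clause agents that previously shared a literal node (pitfall~2), and that there is no roundabout way — chaining separating agents, or detouring along an arc — for an outside agent to eat away more than $\tfrac{\zeta}{3}$ of either length-$1$ edge. This is exactly where properties (i)--(iii) of $L$-$\delta$-chains together with the choices $L=9$ and $\delta=\tfrac{\zeta}{9}$ do the work; once it is in place, everything else is the same bookkeeping as in Proposition~\ref{prop:np-no-clause}, just with the sharper budget inequality that defines $\zeta=\tfrac{\varepsilon}{6-\varepsilon}$.
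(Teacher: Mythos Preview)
Your proposal is correct and follows essentially the same argument as the paper's proof: bound the help from separating agents on each length-$1$ edge by $3\delta=\tfrac{\zeta}{3}$, so the clause agent at $c_{i,j}$ must traverse at least $4\bigl(1-\tfrac{\zeta}{3}\bigr)$, and then invoke the budget inequality $2(2-\varepsilon)(1+\zeta)<4\bigl(1-\tfrac{\zeta}{3}\bigr)$. The only difference is cosmetic: you spell out the locality claim (why no other clause or variable agent can reach the interior of the two length-$1$ edges, and why the arc replacement severs the second clause agent) more explicitly than the paper does, which simply asserts that separating agents help by at most $\tfrac{\zeta}{3}$ and moves on.
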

\begin{proof}
	This has been shown in the main paper, we repeat the arguments here for the reader's convenience:
	We show that an augmented clause agent stationed at a clause node $c_{i,j}$ is not able to transport the message from $l_{i,c}$ to $l_{j,c}$, 
	not even in collaboration with the separating agents that can reach the two literal nodes. 
	The edges $\left\{ c_{i,j},\, l_{i,c} \right\},\, \left\{ c_{i,j},\, l_{j,c} \right\}$ have length $1$. 
	In each edge, separating agents can help by at most $3\delta = \tfrac{\zeta}{3}$, leaving at least a distance of $1-\tfrac{\zeta}{3}$ for the clause agent to cover.
	First note that for $0 < \varepsilon < 1$, we have $\zeta = \smash{\tfrac{\varepsilon}{6-\varepsilon} < \tfrac{\varepsilon}{5} < \tfrac{2\varepsilon}{3}}$ and $(6-\varepsilon) > 3(2-\varepsilon)$.
	Hence a $\gamma$-resource-augmented clause agent has a budget of only
	$\gamma\cdot 2(1+\zeta) = 2(2-\varepsilon)(1+\zeta) = 2( 2-\varepsilon+\tfrac{(2-\varepsilon)\varepsilon}{6-\varepsilon} ) < 2(2-\tfrac{2\varepsilon}{3}) < 2(2-\zeta) < 4\cdot (1-\tfrac{\zeta}{3})$, 
	and thus cannot transport the message via its clause node and return home in the end.\footnote{For non-returning agents we use (for $\varepsilon < 2$) the inequalities: 
		$\zeta = \tfrac{\varepsilon}{6-\varepsilon} < \tfrac{\varepsilon}{4} < \tfrac{\varepsilon}{2}$ and $(6-\varepsilon) > 2(3-\varepsilon)$.
		Hence a non-returning $\gamma$-resource-augmented clause agent has a budget of 
		$\gamma(1+\zeta) = (3-\varepsilon)(1+\zeta) = 3-\varepsilon + \tfrac{(3-\varepsilon)\varepsilon}{6-\varepsilon} < 3-\tfrac{\varepsilon}{2} < 3-\zeta = 3\cdot (1-\tfrac{\zeta}{3})$, 
		and thus cannot transport the message via its clause node.\label{foot:3-eps-bounds2}}
\end{proof}

\begin{proposition}
	\label{prop:res-original-budget}
	Assume that there is a schedule in which $\gamma$-resource-augmented agents ($\gamma < 3$) collectively deliver the message from $s$ to $t$ in the delivery graph $G'(F)$.
	Then in each $\zeta$-segment that the message is delivered over, the schedule uses the corresponding variable agent or the corresponding clause agent.
	Furthermore the schedule can be transformed into a feasible schedule with the original budgets.
\end{proposition}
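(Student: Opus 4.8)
I would prove Proposition~\ref{prop:res-original-budget} along the same lines as the NP-hardness analysis (Propositions~\ref{prop:np-no-clause} and~\ref{prop:np-gaps}), but now using the two structural devices of $G'(F)$: the $L$-$\delta$-chains and the parallel arc replacements. The first step is a \emph{confinement} argument for the non-separating agents. Since $L=9$ exceeds the augmented budget of every variable agent ($\gamma\cdot 2\zeta < 6\zeta$) and of every clause agent ($\gamma\cdot 2(1+\zeta) < 2(2-\tfrac{2\varepsilon}{3}) < 4$ by Proposition~\ref{prop:res-no-clause}) by a factor of more than $2$, no variable or clause agent can traverse an $L$-$\delta$-chain; and by property~(ii) of $L$-$\delta$-chains the chain's own separating agents cannot leave it by more than $3\delta=\tfrac{\zeta}{3}$, so they cannot help an outside agent across either. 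Hence each variable agent is confined to the $\zeta$-segments incident to its variable node $v_i$, and each clause agent is confined to the (at most two) length-$1$ edges joining its clause node to literal nodes, i.e.\ to the $\zeta$-segments incident to those literal nodes. I expect this step, making the $L$-$\delta$-chain separation airtight in the resource-augmented regime, to be the main obstacle — one has to check that the combined reach of an outside augmented agent and the chain's separating agents still falls short of $L$, and that inner-block agents cannot even cross the neighbouring block of six tubes; once this is settled, the rest is bookkeeping analogous to the unaugmented case.

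Next I would show that separating agents alone cannot bring the message across any $\zeta$-segment: approaching from both ends they reach only a distance $3\delta=\tfrac{\zeta}{3}$ into it, leaving an uncovered sub-interval of length $\tfrac{\zeta}{3}>0$ over which no handover is possible. Combined with Proposition~\ref{prop:res-no-clause} (no augmented clause agent can move the message from $l_{i,c}$ to $l_{j,c}$ across its clause node), this forces the message to travel monotonically through exactly one of $p_{i,\true},p_{i,\false}$ for each $i$ and to use a variable or clause agent at every $\zeta$-segment it crosses. The confinement step then pins down \emph{which} agent: for a $\zeta$-segment at $v_i$ it is the variable agent at $v_i$; for a $\zeta$-segment at a singly-reachable literal node it is the clause agent at the unique adjacent clause node; and for a $\zeta$-segment on one of the two parallel arcs replacing a doubly-reachable $l_{j,c}$ it is the clause agent at the one clause node ($c_{i,j}$ for the inner arc, $c_{j,k}$ for the outer) from which that arc's $\zeta$-segment is reachable — the arc construction guarantees the agents at $c_{i,j}$ and $c_{j,k}$ can no longer meet, so no single clause agent can serve both arcs. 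Moreover Proposition~\ref{prop:res-no-clause} shows a clause agent bridges at most one of the $\zeta$-segments incident to its literal nodes (bridging two would route the message across its clause node), and a variable agent bridges only the first $\zeta$-segment of its chosen path.

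Finally I would transform the schedule into one with the original budgets. Read off the assignment $v_i:=\true$ iff the message uses $p_{i,\true}$, and record, for every $\zeta$-segment that the given schedule bridges, the corresponding agent it used. Now replace the whole schedule by the canonical one from the proof of Proposition~\ref{prop:res-schedule} (equivalently Proposition~\ref{prop:SAT-schedule}): every variable agent steps $\zeta$ into its chosen path and back (cost $2\zeta$, its original budget); every separating agent in a chain on the message's route performs the within-chain transport of property~(iii) of the underlying $\delta$-tubes, using only its original budget; and every clause agent that the given schedule used walks the length-$1$ edge to the relevant literal node, carries the message over the adjacent $\zeta$-segment and returns (cost $2(1+\zeta)$, its original budget). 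Because the given augmented schedule already realised exactly this incidence pattern — one $\zeta$-segment per used variable agent, at most one per used clause agent, and the chains handled internally — the canonical schedule is feasible with the original budgets, which is precisely the last assertion of the proposition.
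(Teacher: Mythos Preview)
Your proposal is correct and follows essentially the same approach as the paper's own proof: both argue that separating agents cannot bridge a $\zeta$-segment (reach $\le 3\delta=\zeta/3$ from each side), that variable and clause agents cannot meet across any $L$-$\delta$-chain, and hence that each traversed $\zeta$-segment must be handled by its designated variable or clause agent; the transformation to original budgets then replaces the augmented handovers by the canonical ones at the chain endpoints. Your write-up is in fact more explicit than the paper's in two places --- the confinement bound $L=9$ versus the augmented clause budget, and the role of the parallel-arc replacement in preventing the two clause agents of a $3$-literal clause from cooperating --- but the underlying logic is identical.
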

\begin{proof}
	By Proposition~\ref{prop:res-no-clause} we know that the message cannot be transported over any of the clause nodes. 
	Recall that $\gamma$-resource-augmented separating agents inside $\delta$-tubes can neither single-handedly nor collectively (starting from both sides) 
	transport the message over a $\zeta$-segment, since they can move outside of their residential $L$-$\delta$-chain by at most $3\delta = \tfrac{\zeta}{3}$.
	Furthermore, clause agents and variable agents are not able to meet each other anywhere in the graph, since they are pairwise separated by at least one $L$-$\delta$-chain and do not have enough energy 
	(even with the resource-augmented budgets) to reach the middle point of one of these chains.
	
	In conclusion, we know that the message needs to be transported along the $\true$- and $\false$-paths and without loss of generality we assume that this happens in a strictly monotone movement. 
	Now in each $\zeta$-segment the message is transported across in the schedule with $(2-\varepsilon)$-resource-augmented budget, 
	a variable agent or a clause agent is necessary. Since these agents cannot meet each other, 
	such an agent must pick up the message from a separating agent of the preceeding $L$-$\delta$-chain and hand it over to a separating agent of the following $L$-$\delta$-chain. 
	Even with a non-augmented budget, said agent would be able to pick up and hand over the message at the end and the start of these chains. 
	Additionally, separating agents are able to transport the message from left to right across their $L$-$\delta$-chain without a resource augmentation of their budgets. 
	Together this yields a solution for delivery on $G'(F)$ without any resource-augmented budgets.
\end{proof}

\lemreshardness*
\begin{proof}
	``$\Rightarrow$'' This direction has been shown in Proposition~\ref{prop:res-schedule}.

	``$\Leftarrow$''  Assume that the message can be delivered from $s$ to $t$. 
	From Proposition~\ref{prop:res-no-clause} it follows that the message has to be transported through the $\true$- and $\false$-paths.
	By Proposition~\ref{prop:res-original-budget} we know that the schedule thus can be transformed into a schedule of agents with non-augmented budgets, 
	where in  each $\zeta$-segment that the message is delivered over, the corresponding variable agent or the corresponding clause agent is used.

	We show that there is a bijective mapping into a feasible schedule in the original graph $G(F)$, which by Lemma~\ref{lem:budget-with-return} 
	gives us a satisfiable assignment for $F$. Consider the movement of the individual agents:

	First of all, we let every \emph{variable agent} in $G(F)$ do the same work as its counterpart in $G'(F)$ and vice versa. 
	Now consider the \emph{separating agents} of any $\delta$-tube in $G(F)$ which corresponds to a $L$-$\delta$-chain in $G'(G)$. 
	We let these agents collectively transport the message from left to right over their $\delta$-tube in $G(F)$ 
	if and only if the agents in the corresponding chain in $G'(F)$ transport the message over their $L$-$\delta$-chain.
	Finally, we let corresponding \emph{clause agents} in both graphs go to the same $\zeta$-segment (both to their first segment, both to their second segment, or both to neither). 
	Hence agents in $G(F)$ can just ``copy'' the movements of their respective counterparts in $G'(F)$.
\end{proof}

\noindent It is easy to see that the same arguments work for \nonreturning \ourProblem as well: In the proof, we simply replace the use of
Lemma~\ref{lem:budget-with-return} by referring to Corollary~\ref{cor:budget-without-return} and replace the estimations in the proof of Proposition~\ref{prop:res-no-clause}
with the corresponding estimations for \nonreturning $(3-\varepsilon)$-resource-augmented agents, given in Footnote~\ref{foot:3-eps-bounds2}.

\begin{corollary}[Resource-augmented Non-Returning BudgetedDelivery]
	A planar 3CNF $F$ is satisfiable if and only if it is possible to deliver a message with $(3-\varepsilon)$-resource-augmented
	agents from $s$ to $t$ in the corresponding delivery graph $G'(F)$.	
	\label{cor:resource-augmented-budget-without-return}
\end{corollary}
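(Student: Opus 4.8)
The plan is to obtain the corollary by re-running the proof of Lemma~\ref{lem:resource-augmented-budget-with-return} with the budget values reserved for non-returning agents. I would keep the delivery graph $G'(F)$, the $L$-$\delta$-chains, and the arc replacement of Section~\ref{sec:resource-augmentation-hardness} untouched, and only substitute the non-returning budgets recorded in the footnotes there: variable agents get budget $\zeta$, clause agents get $1+\zeta$, and the separating agents inside each $\delta$-tube get the (halved) budgets that make the non-returning $\delta$-tube of Figure~\ref{fig:placement_weights} satisfy its four defining properties. The parameters $L=9$, $\delta=\zeta/9$, and $\zeta=\smash{\varepsilon/(6-\varepsilon)}$ stay the same, and ``$(2-\varepsilon)$'' is replaced by ``$(3-\varepsilon)$'' throughout.

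For the ``$\Rightarrow$'' direction I would exhibit the same schedule as in Proposition~\ref{prop:res-schedule}: each variable agent walks a $\zeta$-distance into the path dictated by a fixed satisfying assignment, the separating agents ferry the message across every $L$-$\delta$-chain, and for each clause its (at most $|c|-1$) clause agents step out to the $\zeta$-segments belonging to the unsatisfied literals. Because the non-returning budgets were chosen so that a non-returning agent has exactly the one-way carrying range of its returning counterpart, every move feasible in the returning schedule remains feasible here; the only extra thing to invoke is that non-returning $\delta$-tubes, and hence $L$-$\delta$-chains, transport the message left-to-right, which is built into their definition.

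For the ``$\Leftarrow$'' direction I would reprove the three structural propositions with the new numbers. First, the non-returning analogue of Proposition~\ref{prop:res-no-clause}: to route the message from $l_{i,c}$ to $l_{j,c}$ over the clause node $c_{i,j}$, the clause agent must traverse its pickup edge twice and its drop-off edge once, hence cover at least $3(1-\tfrac{\zeta}{3}) = 3-\zeta$ even after separating agents shorten each edge by $3\delta=\tfrac{\zeta}{3}$; but a $(3-\varepsilon)$-resource-augmented clause agent has budget $(3-\varepsilon)(1+\zeta) = 3-\varepsilon+\smash{\tfrac{(3-\varepsilon)\varepsilon}{6-\varepsilon}} < 3-\tfrac{\varepsilon}{2} < 3-\zeta$, precisely the chain of inequalities in Footnote~\ref{foot:3-eps-bounds2}. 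Second, separating agents still cannot bridge a $\zeta$-segment, since property~(ii) of an $L$-$\delta$-chain already bounds the reach of any $\gamma$-resource-augmented ($\gamma<3$) agent in it by $3\delta=\tfrac{\zeta}{3}<\tfrac{\zeta}{2}$. Third, variable and clause agents still cannot meet, because any two of them are separated by a chain of length at least $L=9$, more than twice any augmented budget, so neither can reach a chain's midpoint. With these three facts the message is forced to travel monotonically along the true/false paths, using the corresponding variable or clause agent on every traversed $\zeta$-segment; as in Proposition~\ref{prop:res-original-budget} the schedule reduces to one with original budgets, which copies over to a feasible schedule in $G(F)$, and Corollary~\ref{cor:budget-without-return} then reads off a satisfying assignment of $F$.

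The genuine content is just the verification that each inequality of the returning proof survives the swap to non-returning budgets — above all the clause-node bound and the ``cannot meet / cannot cross a chain'' estimates — and that the non-returning $\delta$-tube gadget of Figure~\ref{fig:placement_weights} really does meet all four defining properties, so that the $L$-$\delta$-chain argument transfers verbatim. I expect this to be routine; everything else is a line-by-line transcription of the proof of Lemma~\ref{lem:resource-augmented-budget-with-return}.
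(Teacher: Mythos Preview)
Your proposal is correct and follows essentially the same approach as the paper: the paper's own argument for the corollary is simply to rerun the proof of Lemma~\ref{lem:resource-augmented-budget-with-return}, swapping in the non-returning budgets from the footnotes, replacing the clause-node budget estimate by the $(3-\varepsilon)$ chain of inequalities in Footnote~\ref{foot:3-eps-bounds2}, and invoking Corollary~\ref{cor:budget-without-return} in place of Lemma~\ref{lem:budget-with-return} at the end. Your write-up spells out in more detail the verification that the separation estimates (separating agents confined to $3\delta=\zeta/3$, chains of length $L=9$ preventing meetings) continue to hold for $\gamma<3$, but this is exactly what the paper has in mind when it says ``the same arguments work.''
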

\newpage

\end{document}